\def \qed {\hfill $\Box$}
\def \mbZ {\mathbb{Z}_p}
\def \mbZs {\mathbb{Z}_p^*}
\def \mcI {\mathcal{I}}
\newtheorem{theorem}{\bf Theorem}[section]
\newtheorem{conjecture}{\bf Conjecture}[section]
\newtheorem{proposition}{\bf Proposition}[section]
\newtheorem{lemma}{\bf Lemma}[section]
\newtheorem{remark}{\bf Remark}[section]
\title{On Acyclic Edge-Coloring of Complete Bipartite Graphs}
\author[1]{Ayineedi Venkateswarlu\thanks{{e-mail: \texttt{venku@isichennai.res.in}}; Corresponding author}}
\affil[1]{Computer Science Unit, Indian Statistical Institute - Chennai Centre,
MGR~Knowledge~City~Road, Taramani, Chennai~--~600113, INDIA.}
\author[2]{Santanu Sarkar\thanks{{e-mail: \texttt{santanu@iitm.ac.in}}}}
\affil[2]{Department of Mathematics\\ Indian Institute of Technology Madras,
Chennai~--~600036, INDIA.}
\author[1]{A. Sai Mali\thanks{{e-mail: \texttt{sai.mali.mail@gmail.com }}}}
\date{}
\begin{document}
\maketitle

\begin{abstract}
An acyclic edge-coloring of a graph is a proper edge-coloring without bichromatic
($2$-colored) cycles.  The acyclic chromatic index of a graph $G$, denoted by $a'(G)$,
is the least integer $k$ such that $G$ admits an acyclic edge-coloring using $k$ colors.
Let $\Delta = \Delta(G)$ denote the maximum degree of a vertex in a graph $G$.
A complete bipartite graph with $n$ vertices on each side is denoted by $K_{n,n}$.
Basavaraju, Chandran and Kummini proved that $a'(K_{n,n}) \ge n+2 = \Delta + 2$
when $n$ is odd. Basavaraju and Chandran provided an acyclic edge-coloring of $K_{p,p}$
using $p+2$ colors and thus establishing $a'(K_{p,p}) = p+2 = \Delta + 2$ when $p$
is an odd prime. The main tool in their approach is perfect $1$-factorization of $K_{p,p}$.
Recently, following their approach, Venkateswarlu and Sarkar have shown that $K_{2p-1,2p-1}$
admits an acyclic
edge-coloring using $2p+1$ colors which implies that $a'(K_{2p-1,2p-1}) = 2p+1 = \Delta + 2$,
where $p$ is an odd prime. In this paper, we generalize this approach and present a general
framework to possibly get an acyclic edge-coloring of $K_{n,n}$ which possess a perfect
$1$-factorization  using $n+2 = \Delta+2$ colors. In this general framework, we
show that $K_{p^2,p^2}$ admits an acyclic edge-coloring using $p^2+2$ colors and thus
establishing $a'(K_{p^2,p^2}) = p^2+2 = \Delta + 2$ when $p\ge 5$ is an odd prime.\\

{\bf Keywords}: Acyclic edge-coloring, Acyclic chromatic index, Perfect $1$-factorization,
Complete bipartite graphs
\end{abstract}

\section{Introduction}
Let $G=(V,E)$ be a finite and simple graph. A \emph{proper edge-coloring} of $G$ is an
assignment of colors to the edges so that no two adjacent edges have same color. So it is a map 
$\theta: E \rightarrow \mathcal{C}$ with $\theta(e) \neq \theta(f)$ for any adjacent edges $e,f\in E$,
where $\mathcal{C}$ is the set of colors. The \emph{chromatic index}, denoted by $\chi'(G)$,
is the least integer $k$ such that $G$ admits a proper edge-coloring using $k$ colors.
A proper coloring of $G$ is \emph{acyclic} if there is no two colored cycle in $G$. 
The \emph{acyclic edge chromatic number} (also called \emph{acyclic chromatic index}),
denoted by $a'(G)$, is the least integer $k$ such that $G$ admits an acyclic edge-coloring
using $k$ colors. The notion of acyclic coloring was first introduced by Gr{\"{u}}nbaum~\cite{gru}
in~$1973$, and the concept of acyclic edge-coloring was first studied by~Fiam\u{c}\'{i}k~\cite{fl}.
Let $\Delta=\Delta(G)$ be the maximum degree of a vertex in $G$. It is obvious that
any proper edge-coloring requires at least $\Delta$ colors. Vizing~\cite{vi} proved that
there always exists a proper edge-coloring with $\Delta+1$ colors. Since any acyclic edge
coloring is proper, we must have $a'(G) \geq \chi'(G) \geq \Delta$. On the other hand,
in $1978$, Fiam\u{c}\'{i}k~\cite{fl} (also independently by Alon, Sudakov and Zaks~\cite{al2})
posed the following conjecture.
\begin{conjecture}\label{conj}
for any graph $G,\ a'(G) \leq \Delta+2$.
\end{conjecture}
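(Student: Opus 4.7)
The statement is the Fiam\u{c}\'{i}k--Alon--Sudakov--Zaks conjecture, which has remained open in full generality since 1978 despite a great deal of attention. A realistic proposal therefore splits into a probabilistic track aimed at arbitrary $G$ and a constructive track that targets structured families in the spirit of the present paper; I would pursue both in parallel and be explicit up front that closing the additive gap to exactly $+2$ is the main obstacle.

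On the probabilistic side, the plan is to colour the edges of $G$ independently and uniformly from a palette of size $c\Delta$, resample to enforce properness, and for every edge $e$ and every closed walk $C$ through $e$ introduce a bad event that $C$ becomes bichromatic. The number of bad events depending on a given one is controlled by $\Delta$ and the length of $C$, so an asymmetric Lov\'asz Local Lemma combined with a geometric sum over cycle lengths yields $a'(G)\le c\Delta$. I would then try to tighten this via the entropy-compression technique of Esperet and Parreau or a recursive palette-shrinking scheme, which progressively shrinks $c$. The difficulty is that every such argument loses a multiplicative factor in the union bound over cycle lengths and cannot be pushed to $1+O(1/\Delta)$, let alone to the exact additive constant $2$; this is the main obstacle on this track.

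On the constructive side, following the template developed in the present paper, I would fix a proper edge-decomposition of $G$ into $\Delta$ matchings (or $\Delta+1$ by Vizing), adjoin at most two extra colour classes obtained by carefully recolouring a small controlled set of edges, and verify that the union of any two colour classes is acyclic. Because the union of two matchings is always a disjoint union of paths and even cycles, the entire problem reduces to ruling out bichromatic even cycles. For classes admitting a perfect $1$-factorization one can exploit algebraic symmetries of the factorization, exactly as Basavaraju--Chandran do for $K_{p,p}$ and the present authors extend to $K_{p^2,p^2}$. For general $G$ one would need a structural analogue of the perfect $1$-factorization conjecture for $K_{2n}$, which is itself a famous open problem and is the main obstacle on this track. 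A sensible intermediate goal, rather than attacking the full conjecture, is to extend the present framework to further families admitting a known perfect $1$-factorization and thereby enlarge the class of graphs for which the bound $\Delta+2$ is attained constructively.
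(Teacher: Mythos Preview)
Your assessment is accurate in the most important respect: the paper does \emph{not} prove Conjecture~\ref{conj}. It is stated as an open conjecture, and the paper's contribution is confined to the constructive track you describe, establishing the exact value $a'(K_{p^2,p^2})=\Delta+2$ via the perfect $1$-factorization framework. There is therefore no ``paper's own proof'' of the full conjecture to compare your proposal against.

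Your two-track summary is a fair reflection of the state of the art. On the probabilistic side, you correctly identify that Local Lemma and entropy-compression arguments yield $a'(G)\le c\Delta$ with $c>1$ and that the additive-constant barrier is the genuine obstruction; no refinement of that method is known to reach $\Delta+O(1)$. On the constructive side, you correctly note that the approach of the paper hinges on algebraic structure (here, a perfect $1$-factorization together with a carefully chosen transversal) and does not extend to arbitrary $G$. Your suggested intermediate goal --- pushing the framework to further families with known perfect $1$-factorizations --- is exactly the programme the paper advances, and you have identified the right bottleneck (the perfect $1$-factorization conjecture itself) for going further.

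In short: you have not proved the conjecture, but neither has the paper, and you have been appropriately honest about that. Nothing in your proposal is wrong; it is a research outline rather than a proof, which is the correct response to an open problem.
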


In~\cite{al2} it was proved that there exists a constant $c$ such that $a'(G) \leq \Delta +2$
for any graph with girth is at least $c\Delta\log \Delta$. It was also proved in~\cite{al2} that
$a'(G) \leq \Delta+2$ for almost all $\Delta$-regular graphs. Later N{\v{e}}set{\v{r}}il
and Wormald~\cite{nw} improved 
this bound and showed that $a'(G) \leq \Delta +1$ for a random regular graph $G$. In another
direction, there have been many results giving upper bounds on $a'(G)$ for an arbitrary graph $G$.
For example, Alon, McDiarmid and Reed~\cite{al} proved that $a'(G) \leq 64 \Delta$. Molloy and
Reed~\cite{mr} improved this bound and showed that $a'(G) \leq 16 \Delta$. Recently,
Ndreca~{\em et. al.} obtained $a'(G) \leq 9.62 \Delta$~\cite{ndr} which is currently the best
upper bound for an arbitrary graph $G$.
Muthu, Narayanan and Subramanian~\cite{m1,m4} obtained better bounds: $a'(G) \leq 4.52 \Delta$
for graphs with girth at least 220; $a'(G) \leq 6\Delta$ for graphs with girth at least $9$.
The acyclic edge-coloring of planar graphs has been deeply studied in recent years.
See~\cite[Section~3.3]{wb} for a nice account of recent results.

The Conjecture~\ref{conj} was shown to be true for some special classes of graphs.
Burnstein~\cite{ber} showed that $a'(G) \leq 5$  when $\Delta=3$. Hence the
conjecture is true when $\Delta\leq 3$.  Muthu, Narayanan and Subramanian proved
that the conjecture holds true for grid-like graphs~\cite{m2} and outerplanner graphs~\cite{m3}. 
It has been observed that determining $a'(G)$ is a hard problem from both theoretical
and algorithmic points of view~\cite[p. 2119]{wb}. In fact, we do not yet know the values of
$a'(G)$ for some simple and highly structured graphs like complete graphs and complete bipartite
graphs in general. Fortunately, we can get the exact value of $a'(G)$ for some cases of complete
bipartite graphs, thanks to the perfect $1$-factorization.

Let $K_{n,n}$ be the complete bipartite graph with $n$ vertices on each side. The complete
bipartite graph $K_{n,n}$ is said to have a perfect 1-factorization if  the edges of $K_{n,n}$
can be decomposed into $n$ disjoint perfect matchings such that the union of any two
perfect matchings gives a Hamiltonian cycle and it is of length $2n$
(see Section~\ref{sec2} for more details).  It is known that when $n+2 \in \{p, 2p-1,p^2\}$,
where $p$ is an odd prime, or $n+2<50$ and odd, then $K_{n+2,n+2}$ has a
perfect 1-factorization (see~\cite{br}). One can easily see that if $K_{n+2,n+2}$
has a perfect 1-factorization then $a'(K_{n,n}) \leq a'(K_{n+1,n+1}) \leq n+2$.
And also we have the following result due to Basavaraju, Chandran and Kummini~\cite{ba}.
\begin{theorem}\label{thm1}
$a'(K_{n,n}) \geq n+2=\Delta+2$, when $n$ is odd.
\end{theorem}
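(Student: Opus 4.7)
My plan is to handle $a'(K_{n,n})\ge n+1$ first and then rule out equality when $n$ is odd. For the former I would just note that any proper $n$-edge-coloring of $K_{n,n}$ is a $1$-factorization, and the union of any two distinct perfect matchings of $K_{n,n}$ is a disjoint collection of even cycles of length $\ge 4$, each of which is bichromatic; hence $a'(K_{n,n})\ge n+1$ for every $n\ge 2$. The real task is therefore to rule out $n+1$ colors when $n$ is odd (we may assume $n\ge 3$), so assume for contradiction that an acyclic edge-coloring of $K_{n,n}$ with $n+1$ colors exists.

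The first step is to force the shape of the color classes. Each class is a matching of size at most $n$, the $n^2$ edges are split across $n+1$ classes, and at most one class can be a perfect matching (else two perfect matchings would give a bichromatic cycle). A short count then pins the shape uniquely: one perfect matching $M_0$ and $n$ near-perfect matchings $M_1,\dots,M_n$ of size $n-1$, with $M_j$ omitting a unique pair $(a_j,b_j)\in A\times B$. Because every vertex has exactly one missing color and that color must belong to some $M_j$, the maps $j\mapsto a_j$ and $j\mapsto b_j$ are bijections, and I would relabel so that $A\cong B\cong\{1,\dots,n\}$ via $a_k\leftrightarrow k\leftrightarrow b_k$. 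Under this identification $M_0$ becomes a derangement $\pi$, and extending each $M_j$ by the edge $a_jb_j$ yields a perfect matching $M_j'$ whose associated permutation $\mu_j$ satisfies $\mu_j(j)=j$.

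Next I would run a sign argument. The subgraph $M_0\cup M_j$ is bichromatic and acyclic with only the two degree-$1$ vertices $a_j,b_j$, so it is a Hamilton path from $a_j$ to $b_j$; adjoining $a_jb_j$ gives a Hamilton cycle in $K_{n,n}$, equivalently $\pi^{-1}\mu_j$ is a single $n$-cycle. For $n$ odd that cycle is an even permutation, so $\mathrm{sgn}(\mu_j)=\mathrm{sgn}(\pi)$ for every $j$. For distinct $i,j\in\{1,\dots,n\}$ the bichromatic subgraph $M_i\cup M_j$ is a disjoint union of two paths whose four endpoints are exactly $a_i,a_j,b_i,b_j$. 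A case analysis of the three possible endpoint pairings of these two paths then shows that $M_i'\cup M_j'$ is a disjoint union of two cycles of lengths $p,q$ with $p+q=n$ in the pairing $\{a_i,b_i\},\{a_j,b_j\}$ and a single Hamilton cycle in each of the other two pairings; correspondingly the sign of $\mu_i^{-1}\mu_j$ is $(-1)^{n-2}=-1$ in the first pairing and $(-1)^{n-1}=+1$ in the other two. Since $\mathrm{sgn}(\mu_i^{-1}\mu_j)=\mathrm{sgn}(\mu_i)\mathrm{sgn}(\mu_j)=+1$, the first pairing is impossible, so $\mu_i^{-1}\mu_j$ is always a single $n$-cycle; in particular it is fixed-point-free, so $\mu_i(k)\ne\mu_j(k)$ for all $k$ and all $i\ne j$.

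To conclude, I would look at the color of the edge $a_jb_j$ inside $K_{n,n}$ itself. It cannot be $0$ (else $\pi(a_j)=b_j$, spoiling the Hamilton-path structure of $M_0\cup M_j$) and cannot be $j$ (both endpoints miss color $j$), so it has some color $c(j)\in\{1,\dots,n\}\setminus\{j\}$; equivalently $M_{c(j)}$ matches $a_j$ to $b_j$, forcing $\mu_{c(j)}(j)=j=\mu_j(j)$. Since $c(j)\ne j$, this contradicts the fixed-point-freeness of $\mu_{c(j)}^{-1}\mu_j$ established in the preceding paragraph. I expect the sign computation for the three endpoint pairings of $M_i\cup M_j$ to be the main obstacle: it is the only step that actually uses the hypothesis ``$n$ odd,'' and keeping the color alternation on the two paths straight (so the parities of $p$ and $q$ and of the two sub-path lengths come out correctly in each case) takes some care. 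Once the fixed-point-free property is in hand, the closing contradiction from the color of $a_jb_j$ is a one-liner.
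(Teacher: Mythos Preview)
The paper does not prove Theorem~\ref{thm1}; it is quoted as a result of Basavaraju, Chandran and Kummini~\cite{ba} and used as a black box for the lower bound. So there is no in-paper proof to compare against. That said, your proposal is a complete and correct argument, and it is worth recording how it works.

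Your counting that forces the color-class profile (one perfect matching $M_0$ and $n$ matchings of size $n-1$, with the missed vertices $a_j,b_j$ giving bijections $j\mapsto a_j$, $j\mapsto b_j$) is exactly right. The Hamilton-path structure of $M_0\cup M_j$ then makes $\pi^{-1}\mu_j$ an $n$-cycle, so for odd $n$ all $\mu_j$ have the same sign; this is the only place oddness enters. For $i\neq j$ in $\{1,\dots,n\}$ the two-path structure of $M_i\cup M_j$ admits at most the three endpoint pairings you list, and your parity computation is accurate: the pairing $\{a_i,b_i\},\{a_j,b_j\}$ gives $\mu_i^{-1}\mu_j$ of cycle type $(p,q)$ with sign $(-1)^{n-2}=-1$, contradicting $\mathrm{sgn}(\mu_i)=\mathrm{sgn}(\mu_j)$, while the remaining pairings yield a single $n$-cycle. (In fact the pairing $\{a_i,b_j\},\{a_j,b_i\}$ is already impossible by a simpler color/side parity check: the path out of $a_i$ must start with an $M_j$-edge and the edge at $b_j$ must be an $M_i$-edge, forcing even length, while an $A$--$B$ path in $K_{n,n}$ has odd length. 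This shortcut is optional; your sign argument handles it anyway.) Hence $\mu_i^{-1}\mu_j$ is fixed-point-free for all $i\neq j$, and the color of the edge $a_jb_j$ gives an immediate contradiction exactly as you describe.

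One small expository point: you call $\pi$ a derangement at the moment you introduce it, but the justification (``$\pi(a_j)=b_j$ would spoil the Hamilton-path structure of $M_0\cup M_j$'') only appears in your final paragraph. It would read more cleanly to defer the word ``derangement'' or to insert the one-line argument right after you name $\pi$.
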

Hence $a'(K_{n,n})=n+2=\Delta+2$ when $n+2 \in \{p, 2p-1,p^2\}$. By a result of
Guldan~\cite[Corollary 1]{gul}, we can also get $a'(K_{n+1,n+1})=n+2=\Delta+1$
when $n+2 \in \{p, 2p-1,p^2\}$.

The main idea here is to give different colors to the edges in different $1$-factors in $K_{n+2,n+2}$,
and removal of (one) two vertices on each side and their associated edges gives the required
edge-coloring of $(K_{n+1,n+1})\,K_{n,n}$ with $n+2$ colors. But a different approach is
needed to deal with $K_{n+2,n+2}$ when $n+2 \in \{p, 2p-1, p^2\}$. In 2009, Basavaraju
and Chandran~\cite{ba2} proved that $a'(K_{p,p}) = p+2=\Delta+2$ for any odd prime $p$.
We can view their approach as follows: suitably pick one edge from each $1$-factor and partition
these edges into two groups and each group can possibly be assigned a different color to
get the required result. Following this approach, Venkateswarlu and Sarkar have recently
shown that $a'(K_{2p-1,2p-1}) = 2p+1=\Delta+2$ for any odd prime $p$~\cite{AVSS}.
In this paper we view this approach in a more general setting and propose a general
framework for the proof.  The only remaining infinite class of complete bipartite graphs
that are known to have perfect $1$-factorization is $K_{p^2,p^2}$, where $p$ is odd
prime. In this general framework we provide an acyclic edge-coloring of $K_{p^2,p^2}$
using $p^2+2$ colors when $p\ge 5$. Therefore we state our main result as follows.
\begin{theorem}\label{thm2}
$a'(K_{p^2,p^2}) = p^2+2=\Delta+2$, where $p\ge 5$ is an odd prime.
\end{theorem}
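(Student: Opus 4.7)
The plan is to establish both inequalities for $a'(K_{p^2,p^2})$. The lower bound $a'(K_{p^2,p^2}) \ge p^2+2$ is immediate from Theorem~\ref{thm1}, since $p^2$ is odd for any odd prime $p$. All the work therefore lies in constructing an explicit acyclic edge-coloring with exactly $p^2+2$ colors, following the general framework of the Basavaraju--Chandran and Venkateswarlu--Sarkar arguments referenced in the introduction.

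First I would fix a concrete perfect $1$-factorization $\mathcal{F}=\{F_0,F_1,\dots,F_{p^2-1}\}$ of $K_{p^2,p^2}$, with the two vertex classes indexed by a suitable algebraic structure (most likely $\mathbb{Z}_{p^2}$ or $\mathbb{F}_{p^2}$, whichever supports a convenient starter). Recall the defining property: for every pair $i\ne j$ the union $F_i\cup F_j$ is a single Hamiltonian cycle of length $2p^2$. As a base coloring, I would assign color $i$ to every edge of $F_i$; this uses $p^2$ colors and is proper, but every pair of colors induces a bichromatic Hamiltonian cycle, so acyclicity fails as badly as possible.

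Next I would single out a distinguished edge $e_i\in F_i$ from each $1$-factor and recolor these $p^2$ edges with two brand-new colors $\alpha$ and $\beta$. The choice of the $e_i$ and their partition into matchings $M_\alpha,M_\beta$ is the entire content of the construction; the framework suggests picking the $e_i$ from a small number of algebraic orbits (translates of one or two starter edges under the additive action of the ground ring), with the partition dictated by orbit membership. The selection must simultaneously satisfy: (i) $M_\alpha$ and $M_\beta$ are matchings so the coloring is proper; (ii) the subgraph $M_\alpha\cup M_\beta$ is a forest, which is necessary and sufficient to kill all $(\alpha,\beta)$-bichromatic cycles because in a bipartite graph any cycle in a union of two matchings automatically alternates between them; and (iii) for every old color $i$ and every new color $c\in\{\alpha,\beta\}$, the subgraph $(F_i\setminus\{e_i\})\cup M_c$ contains no cycle.

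The acyclicity verification then splits by the color pair of a putative bichromatic cycle. For two old colors $(i,j)$, any such cycle would sit inside $F_i\cup F_j$, which is a single Hamiltonian cycle by the perfect $1$-factorization property; deleting $e_i$ and $e_j$ leaves a disjoint union of paths, so no cycle survives. The $(\alpha,\beta)$ case is handled by condition~(ii). The genuinely hard case is the mixed pair $(i,c)$ with $c\in\{\alpha,\beta\}$: one must analyse how the new-color matching $M_c$ weaves through the modified $1$-factor $F_i\setminus\{e_i\}$ and show that no alternating cycle closes up. Under the algebraic description of $F_i$, this reduces to showing that a finite family of equations over $\mathbb{F}_{p^2}$ (or $\mathbb{Z}_{p^2}$), indexed by the orbit types of edges in $M_c$, admits no nontrivial closed-walk solution. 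The main obstacle will be to design the starter together with the partition so that this system has no such solution, and it is here that the hypothesis $p\ge 5$ will enter to rule out small-prime degeneracies (presumably explaining why $p=3$ must be excluded or treated by a separate construction). Once these conditions are verified, the construction delivers an acyclic edge-coloring with $p^2+2$ colors, and combining with the lower bound yields $a'(K_{p^2,p^2})=p^2+2$ as claimed.
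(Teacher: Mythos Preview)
Your outline matches the paper's framework in broad strokes---lower bound from Theorem~\ref{thm1}, upper bound by recoloring one edge from each $1$-factor with two new colors---but it leaves unspecified the two choices that carry all the weight, and your guesses about them diverge from what actually works.

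First, the paper does not index by $\mathbb{Z}_{p^2}$ or $\mathbb{F}_{p^2}$; the perfect $1$-factorization of $K_{p^2,p^2}$ used (from Bryant--Maenhaut--Wanless) is built over $\mathbb{Z}_p\times\mathbb{Z}_p$, with the $1$-factors $M_{(a,b)}$ given by an explicit piecewise-affine permutation of $\mathbb{Z}_p\times\mathbb{Z}_p$ involving a fixed generator $x$ of $\mathbb{Z}_p^*$. Second, and more importantly, the selected edges $\{e_i\}$ are not merely taken from ``a small number of algebraic orbits'': they are chosen so that $M=\{e_i\}$ is itself a \emph{perfect matching} (a transversal of the underlying latin square), namely the one corresponding to the permutation $\pi\colon(c,d)\mapsto(yc,xd)$ with $y=x^{-1}$. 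This single choice collapses your condition~(ii) to a triviality (since $M_\alpha\sqcup M_\beta=M$ is a matching) and, crucially, reduces the mixed case~(iii) to analysing the disjoint-cycle decomposition of the permutations $\pi^{-1}\circ\pi_{(a,b)}$ rather than solving systems of equations. The partition $M=M^{(1)}\sqcup M^{(2)}$ then corresponds to a partition $\mathcal{I}=\mathcal{I}^{(1)}\sqcup\mathcal{I}^{(2)}$ of the index set, and acyclicity is equivalent to every nontrivial cycle of every $\pi^{-1}\circ\pi_{(a,b)}$ meeting both parts.

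The paper's $\mathcal{I}^{(2)}$ is the explicit $2p$-element set $\{(0,1),(1,0)\}\cup\{(z,z),(z,zx):z\in\mathbb{Z}_p^*\}$, and the verification proceeds by decomposing $\pi^{-1}\circ\pi_{(a,b)}$ as a product of three simple permutations, exploiting a conjugacy $\pi^{-1}\circ\pi_{(xa,xb)}=\sigma^{-1}(\pi^{-1}\circ\pi_{(a,b)})\sigma$ to reduce to representatives, and then tracking how the factor cycles merge in four cases $(0,0),(*,0),(0,*),(*,*)$. In each case one shows the composite has exactly two nontrivial cycles, each containing elements of both $\mathcal{I}^{(1)}$ and $\mathcal{I}^{(2)}$; the hypothesis $p\ge5$ enters because several cycles have length $p-1$ and must contain at least one element of $\mathcal{I}^{(1)}$ after at most three elements of $\mathcal{I}^{(2)}$ are accounted for, which fails when $p-1\le3$. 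None of this machinery is visible from your plan, and without the transversal idea the permutation-cycle reduction is unavailable.
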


Therefore the acyclic chromatic index is equal to $\Delta+2$ for all the three known infinite
classes of complete bipartite graphs having a perfect $1$-factorization,  and the Conjecture~\ref{conj}
holds true for such graphs.

In the next section we discuss some preliminaries and in Section~\ref{sec3} we present
a general framework to possibly get an acyclic edge-coloring of $K_{n,n}$ which possess
a perfect $1$-factorization using $n+2$ colors. Then we present the proof of
Theorem~\ref{thm2} in this framework in Section~\ref{sec4}.

\section{Preliminaries}\label{sec2}
Let $n\,(\geq 2)$ be an integer. We treat elements of the ring $\mathbb{Z}_n$ as integers in
the range $\{0,1,\ldots,n-1\}$. We denote the complete bipartite graph $K_{n,n}$ as
$G = (V\cup V', E)$ with $|V| = |V'| = n$ and $E = \{(v\mapsto v') : v\in V\ \mbox{and}\ v'\in V'\}$. 
We use $\mapsto$ to define edges though our graph $K_{n,n}$ is undirected. 
This is only for ease of presentation in associating a perfect matching in $K_{n,n}$ with
a permutation of the label set $I (= \{0,1,\ldots,n-1\})$, which we discuss below. Accordingly,
the use of arrows in the Figures~\ref{fig1} and~\ref{fig2} below is to explicitly emphasize
the correspondence between a perfect matching and its associated permutation map.
We use the terms `composition' and `product' of permutations interchangeably.
Note also that a permutation can be decomposed as a product of disjoint cycles uniquely
(upto reorder of the cycles and cyclic rotation of the elements within a cycle) and it is called
disjoint cycle decomposition. We use $\sqcup$ (instead of the usual union notation $\cup$)
to signify union of `disjoint' sets.
\subsection{Perfect matching and Perfect $1$-factorization}\label{sec2p1}
A \emph{matching} in a graph is a set of edges without common vertices, and a \emph{perfect matching}
is a matching which matches all vertices of the graph. In the case of complete bipartite graph $K_{n,n}$,
a perfect matching $M\subset E$ is a set of $n$ edges satisfying:
\begin{itemize}
\item[-] for each vertex $v'\in V'$ there exists a vertex $v\in V$ such that $(v\mapsto v')\in M$.
\item[-] if $(v_1\mapsto v')$ and $(v_2\mapsto v')$ are in $M$ then $v_1 = v_2$.
\end{itemize}

So by labelling the vertices in both $V$ and $V'$ with elements of $I = \{0,1,\ldots,n-1\}$
(or an appropriate label set $I$ of size $n$),
we can interpret a perfect matching $M$ in $K_{n,n}$ as a permutation of the label set $I$,
say $\pi_M$. For convenience, let us illustrate this through an example. Let $n=5$ and
consider the graph $K_{5,5}$ with the same labels from $0$ to $4$ for the vertices
on the top $(V)$ and the bottom $(V')$, as depicted in the figure below.
Let $M = \{(0\mapsto 1), (1\mapsto 2),(2\mapsto 0), (3\mapsto 4),(4\mapsto 3)\}$
then $\pi_M = (012)(34)$.

\begin{figure}[ht]
\centering{
\begin{pspicture}(-0.20,-0.80)(3.80,0.80)
% \pi_1
\psline[linecolor=green]{->}(0.00,0.60)(0.85,-0.50)
\psline[linecolor=green]{->}(0.90,0.60)(1.75,-0.50)
\psline[linecolor=green]{->}(1.80,0.60)(0.05,-0.54)
\psline[linecolor=green]{->}(2.70,0.60)(3.55,-0.50)
\psline[linecolor=green]{->}(3.60,0.60)(2.75,-0.50)
% top labels
\put(-0.10,0.72){$0$}
\put(0.80,0.72){$1$}
\put(1.70,0.72){$2$}
\put(2.60,0.72){$3$}
\put(3.50,0.72){$4$}
% bottom labels
\put(-0.10,-0.98){$0$}
\put(0.80,-0.98){$1$}
\put(1.70,-0.98){$2$}
\put(2.60,-0.98){$3$}
\put(3.50,-0.98){$4$}
% top vertices
\psdots[dotsize=0.12](0.00,0.60)
\psdots[dotsize=0.12](0.90,0.60)
\psdots[dotsize=0.12](1.80,0.60)
\psdots[dotsize=0.12](2.70,0.60)
\psdots[dotsize=0.12](3.60,0.60)
% bottom vertices
\psdots[dotsize=0.14,dotstyle=o](1.80,-0.60)
\psdots[dotsize=0.14,dotstyle=o](0.90,-0.60)
\psdots[dotsize=0.14,dotstyle=o](2.70,-0.60)
\psdots[dotsize=0.14,dotstyle=o](0.00,-0.60)
\psdots[dotsize=0.14,dotstyle=o](3.60,-0.60)
\end{pspicture}
\caption{$\pi_M = (012)(34)$}\label{fig1}
}
\end{figure}
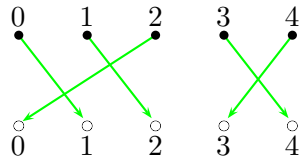 

Note that the union of any two perfect matchings of $K_{n,n}$ forms a collection of disjoint
cycles. These cycles can also be seen from the disjoint cycle decomposition of the composition
of their corresponding permutations. Let $M$ be as mentioned above and
$M' = \{(0\mapsto 1), (1\mapsto 0),(2\mapsto 2), (3\mapsto 3),(4\mapsto 4) \}$.
Then $\pi_{M'} = (01)(2)(3)(4)$.

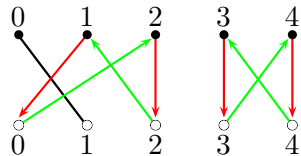
\begin{figure}[ht]
\centering{
\begin{pspicture}(-0.20,-0.80)(3.80,0.80)
% \pi_1
\psline[linewidth=0.03cm](0.02,0.56)(0.86,-0.56)
\psline[linecolor=green]{<-}(0.95,0.55)(1.80,-0.60)
\psline[linecolor=green]{<-}(1.75,0.55)(0.00,-0.60)
\psline[linecolor=green]{<-}(2.75,0.55)(3.60,-0.60)
\psline[linecolor=green]{<-}(3.55,0.55)(2.70,-0.60)
%\pi_2
%\psline[linewidth=0.03cm](0.00,0.60)(0.90,-0.50)
\psline[linecolor=red]{->}(0.90,0.60)(0.00,-0.50)
\psline[linecolor=red]{->}(1.80,0.60)(1.80,-0.50)
\psline[linecolor=red]{->}(2.70,0.60)(2.70,-0.50)
\psline[linecolor=red]{->}(3.60,0.60)(3.60,-0.50)
% top labels
\put(-0.10,0.72){$0$}
\put(0.80,0.72){$1$}
\put(1.70,0.72){$2$}
\put(2.60,0.72){$3$}
\put(3.50,0.72){$4$}
% bottom labels
\put(-0.10,-0.98){$0$}
\put(0.80,-0.98){$1$}
\put(1.70,-0.98){$2$}
\put(2.60,-0.98){$3$}
\put(3.50,-0.98){$4$}
% top vertices
\psdots[dotsize=0.12](0.00,0.60)
\psdots[dotsize=0.12](0.90,0.60)
\psdots[dotsize=0.12](1.80,0.60)
\psdots[dotsize=0.12](2.70,0.60)
\psdots[dotsize=0.12](3.60,0.60)
% bottom vertices
\psdots[dotsize=0.14,dotstyle=o](1.80,-0.60)
\psdots[dotsize=0.14,dotstyle=o](0.90,-0.60)
\psdots[dotsize=0.14,dotstyle=o](2.70,-0.60)
\psdots[dotsize=0.14,dotstyle=o](0.00,-0.60)
\psdots[dotsize=0.14,dotstyle=o](3.60,-0.60)
\end{pspicture}
\caption{Induced subgraph of $M \cup M'$ (in $K_{5,5}$)}\label{fig2}
}
\end{figure} 
We have $\pi_M^{-1}\circ \pi_{M'} = (0) (12) (34)$ and the arrows are placed
accordingly in the figure above. The fixed element $(0)$ corresponds to the common
edge $(0\mapsto 1) \in M\cap M'$ represented by  the normal line in Figure~\ref{fig2},
and the two cycles $C_1 = (12)$ and $C_2 = (34)$ correspond to the cycles
$C_1^g = \{(1\mapsto 0),(2\mapsto 0),(2\mapsto 2),(1\mapsto 2)\}$ and
$C_2^g = \{(3\mapsto 3),(4\mapsto 3),(4\mapsto 4),(3\mapsto 4)\}$ respectively in $K_{5,5}$.
By a careful observation of the example, we can see the following general result.
\begin{lemma}\label{lem01}
Let $M$ and $M'$ be two perfect matchings of $K_{n,n}$ and let $C= (i_0\,i_1\cdots i_{\ell-1}),\,\ell\geq 2,$
be a cycle of length $\ell$ in the disjoint cycle decomposition of $\pi_M^{-1}\circ \pi_{M'}$,
{\em i.e.}, $\pi_{M'}(i_j) = \pi_{M}(i_{j+1})$, where the subscripts
are taken modulo $\ell$. Then the corresponding cycle $C^g$ in $K_{n,n}$ is of length $2\ell$
and the participating edges are given by $\{(i_j\mapsto \pi_M(i_j)) : 0\le j\le \ell-1\}\subset M$ and
$\{(i_j\mapsto \pi_{M'}(i_j)) : 0\le j\le \ell-1\}\subset M'$ which appear alternatively in $C^g$
as depicted in Figure~\ref{fig3}\,: the edges of $M'$ and $M$ are represented by lines
(without arrows) with colors red and green respectively.
\end{lemma}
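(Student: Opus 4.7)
The plan is to exhibit the cycle $C^g$ explicitly by alternately following edges of $M'$ and $M$ starting from a vertex of $C$. Rewriting the cycle condition $\pi_M^{-1}\circ \pi_{M'}(i_j) = i_{j+1}$ as
\[
\pi_{M'}(i_j) = \pi_M(i_{j+1}) \qquad (\text{subscripts modulo } \ell),
\]
one obtains the crucial gluing identity: the bottom endpoint of the $M'$-edge incident to $i_j\in V$ coincides with the bottom endpoint of the $M$-edge incident to $i_{j+1}\in V$. Hence the closed walk
\[
i_0 \xrightarrow{M'} \pi_{M'}(i_0) \xrightarrow{M} i_1 \xrightarrow{M'} \pi_{M'}(i_1) \xrightarrow{M} i_2 \xrightarrow{M'} \cdots \xrightarrow{M} i_{\ell-1} \xrightarrow{M'} \pi_{M'}(i_{\ell-1}) \xrightarrow{M} i_0
\]
is well-defined, uses $\ell$ edges of $M'$ and $\ell$ edges of $M$ arranged alternately, and closes up precisely because of the identity $\pi_{M'}(i_{\ell-1}) = \pi_M(i_0)$ (which corresponds to the wrap-around $i_\ell = i_0$ in the cycle $C$).

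Next I would verify that this closed walk is actually a simple cycle of length $2\ell$ in $K_{n,n}$. Its top endpoints are the $\ell$ elements $i_0,\ldots,i_{\ell-1}$, which are distinct because they form a cycle of length $\ell$ in the permutation $\pi_M^{-1}\circ \pi_{M'}$; its bottom endpoints are $\pi_{M'}(i_0),\ldots,\pi_{M'}(i_{\ell-1})$, which are also distinct since $\pi_{M'}$ is a bijection. The $M$- and $M'$-edges appearing in the walk are exactly those listed in the statement of the lemma, and no edge is used twice, because an equality $(i_j\mapsto \pi_M(i_j)) = (i_k\mapsto \pi_{M'}(i_k))$ would force $i_j=i_k$ and $\pi_M(i_j)=\pi_{M'}(i_j)$, making $i_j$ a fixed point of $\pi_M^{-1}\circ \pi_{M'}$, contrary to $\ell\ge 2$.

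Finally, to conclude that the walk constructed above is \emph{exactly} the cycle $C^g$ rather than a proper sub-arc of a longer cycle in the union $M\cup M'$, I would invoke the standard fact that any walk in $M\cup M'$ must alternate edges from $M$ and $M'$, since every vertex of $K_{n,n}$ is incident to at most one edge from each perfect matching; consequently the walk built above is a maximal alternating closed walk, i.e.\ a connected component of $M\cup M'$, which is $C^g$. There is no real obstacle here beyond careful bookkeeping: the only place one can go wrong is in the index shift $\pi_{M'}(i_j) = \pi_M(i_{j+1})$, since reversing the direction of $C$ or swapping the roles of $M$ and $M'$ would leave the walk non-closing or with the colours interleaved in the wrong order.
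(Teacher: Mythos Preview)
Your proof is correct and complete. The paper does not actually supply a proof of this lemma: it merely says ``By a careful observation of the example, we can see the following general result'' and then illustrates the conclusion with Figure~\ref{fig3}. Your argument makes explicit exactly the bookkeeping the paper leaves to the reader---constructing the alternating closed walk from the gluing identity $\pi_{M'}(i_j)=\pi_M(i_{j+1})$, checking simplicity via distinctness of the $i_j$ and their images, and observing that degree-$2$ structure forces the walk to be a full connected component of $M\cup M'$---so there is nothing to compare.
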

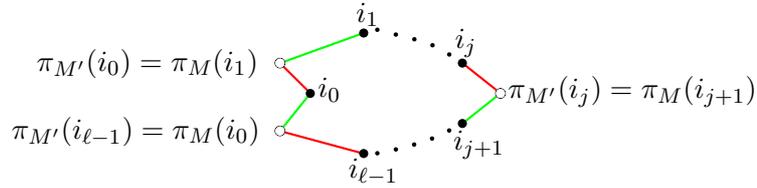
\begin{figure}[ht]
\centering{
\scalebox{1} 
{
\begin{pspicture}(-4.00,-1)(4.00,1.2)
\pscurve[linewidth=0.06,linestyle=dotted,dotsep=0.2cm](0.85,1.05)(1.12,0.98)(1.6,0.84)(1.85,0.70)
\pscurve[linewidth=0.06,linestyle=dotted,dotsep=0.2cm](0.85,-0.60)(1.1,-0.55)(1.6,-0.40)(1.85,-0.28)
\psline[linecolor=red](0.70,-0.60)(-0.40,-0.30)
\psline[linecolor=green](-0.00,0.20)(-0.40,-0.30)
\psdots[dotsize=0.14,dotstyle=o](-0.40,-0.30)
\psline[linecolor=green](0.70,1.00)(-0.40,0.60)
\psline[linecolor=red](-0.00,0.20)(-0.40,0.60)
\psdots[dotsize=0.14,dotstyle=o](-0.40,0.60)
\psline[linecolor=red](2.00,0.60)(2.50,0.20)
\psline[linecolor=green](2.00,-0.20)(2.50,0.20)
\psdots[dotsize=0.14,dotstyle=o](2.50,0.20)
\psdots[dotsize=0.12](0.00, 0.20)
\psdots[dotsize=0.12](0.70, 1.00)
\psdots[dotsize=0.12](2.00, 0.60)
\psdots[dotsize=0.12](2.00,-0.20)
\psdots[dotsize=0.12](0.70,-0.60)
\put(0.10,0.15){$i_0$}
\put(0.60,1.15){$i_1$}
\put(0.50,-0.92){$i_{\ell-1}$}
\put(1.90,0.80){$i_{j}$}
\put(1.88,-0.55){$i_{j+1}$}
\put(-3.60,0.50){$\pi_{M'}(i_0)=\pi_M(i_1)$}
\put(-3.93,-0.40){$\pi_{M'}(i_{\ell-1})=\pi_M(i_0)$}
\put(2.60,0.14){$\pi_{M'}(i_{j})=\pi_M(i_{j+1})$}
\end{pspicture} 
}
\caption{Cycle $C^g$ in the induced subgraph of $M \cup M'$ (in $K_{n,n}$)}\label{fig3}
}
\end{figure} 

A perfect matching is also called a \emph{$1$-factor}, and a partitioning of the edges of a
graph into $1$-factors is a \emph{$1$-factorization}. A $1$-factorization is \emph{perfect}
if the union of any two of its $1$-factors (perfect matchings) is a Hamiltonian cycle. 
As pointed out in the introduction, there are three infinite classes of complete bipartite graphs
known to have perfect $1$-factorization, namely, $n\in \{p, 2p-1, p^2\}$ with $p$
an odd prime. Let us illustrate it by considering the complete bipartite graph $K_{5,5}$.
As discussed above the $1$-factors can be described by
permutations of $I=\{0,1,2,3,4\} $. Consider the $1$-factors $M_i$ given by the permutations
$\pi_i(j) = (i+j)\bmod{5}$ for $i,j\in \mathbb{Z}_5$. In Figure~\ref{fig4} the edges of $M_1$
are the green colored lines and the edges of $M_2$ are the red colored lines and they correspond to
the permutations $\pi_1  = (01234)$ and $\pi_2 = (02413)$ respectively. The induced subgraph
formed by the edges $M_1 \sqcup M_2$ is depicted below (without arrows), and the corresponding
permutation is equal to $\sigma = \pi_1^{-1}\circ \pi_2 = (01234)$. 

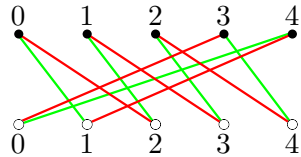
\begin{figure}[ht]
\centering{
\begin{pspicture}(-0.20,-0.80)(3.80,0.80)
% \pi_1
\psline[linewidth=0.03cm,linecolor=green](0.00,0.58)(0.90,-0.60)
\psline[linewidth=0.03cm,linecolor=green](0.90,0.58)(1.80,-0.60)
\psline[linewidth=0.03cm,linecolor=green](1.80,0.58)(2.70,-0.60)
\psline[linewidth=0.03cm,linecolor=green](2.70,0.60)(3.60,-0.60)
\psline[linewidth=0.03cm,linecolor=green](3.60,0.62)(0.00,-0.60)
%\pi_2
\psline[linewidth=0.03cm,linecolor=red](0.00,0.60)(1.80,-0.60)
\psline[linewidth=0.03cm,linecolor=red](0.90,0.60)(2.70,-0.60)
\psline[linewidth=0.03cm,linecolor=red](1.80,0.60)(3.60,-0.60)
\psline[linewidth=0.03cm,linecolor=red](2.70,0.60)(0.00,-0.57)
\psline[linewidth=0.03cm,linecolor=red](3.60,0.60)(0.90,-0.59)
% top labels
\put(-0.10,0.72){$0$}
\put(0.80,0.72){$1$}
\put(1.70,0.72){$2$}
\put(2.60,0.72){$3$}
\put(3.50,0.72){$4$}
% bottom labels
\put(-0.10,-0.98){$0$}
\put(0.80,-0.98){$1$}
\put(1.70,-0.98){$2$}
\put(2.60,-0.98){$3$}
\put(3.50,-0.98){$4$}
% top vertices
\psdots[dotsize=0.12](0.00,0.60)
\psdots[dotsize=0.12](0.90,0.60)
\psdots[dotsize=0.12](1.80,0.60)
\psdots[dotsize=0.12](2.70,0.60)
\psdots[dotsize=0.12](3.60,0.60)
% bottom vertices
\psdots[dotsize=0.14,dotstyle=o](1.80,-0.60)
\psdots[dotsize=0.14,dotstyle=o](0.90,-0.60)
\psdots[dotsize=0.14,dotstyle=o](2.70,-0.60)
\psdots[dotsize=0.14,dotstyle=o](0.00,-0.60)
\psdots[dotsize=0.14,dotstyle=o](3.60,-0.60)
\end{pspicture}
\caption{Induced subgraph of $M_1 \sqcup M_2$ (in $K_{5,5}$)}\label{fig4}
}
\end{figure} 

Now we can see from the above lemma that $M_0, M_1,\ldots, M_{n-1}$
is a perfect 1-factorization of the complete bipartite graph $K_{n,n}$ if and only if
the permutation $(\pi_i^{-1}\circ \pi_j)$ is a full cycle (of length $n$) for any
$0\le i, j\le(n-1)$ with $i\neq j$, where $\pi_i$ is the permutation corresponding to
$M_i$. Next we present a general framework to possibly get an acyclic edge-coloring of
$K_{n,n}$ which possess a perfect $1$-factorization using $n+2$ colors, where $n$ is odd.

\section{A general  framework}\label{sec3}
Let $n\,(\ge 3)$ be an odd integer. Suppose that the complete bipartite graph $K_{n,n}$
has a perfect 1-factorization. Then there exist $n$  disjoint perfect matchings covering
all the $n^2$ edges of $K_{n,n}$, say $M_0, \ldots, M_{n-1}$, such that the union of any
two perfect matchings $M_i \sqcup M_j$ gives a Hamiltonian cycle (which is of length $2n$).
Let $M_i^*\subsetneqq M_i$ be a proper subset of $M_i$ and consider the
following partial coloring:
\begin{equation}\label{eqn01}
\mbox{assign\ color}\ c_i\ \mbox{to}\ \mbox{the\ edges\ in}\ M_i^{*}.
\end{equation}
The remaining edges to be colored is given by
\[ M = \bigsqcup_{i=0}^{n-1} (M_i\setminus M_i^*),\]
and these edges will be assigned some other colors different from the colors in $\{c_0,\ldots,c_{n-1}\}$.
\begin{lemma}\label{lem00}
There can not be a cycle in the induced subgraph formed by the edges from the union
$M_i^*\sqcup M_j^*$ of two color classes $c_i$ and $c_j$ for any $0\le i, j\le (n-1)$ with $i\neq j$.
\end{lemma}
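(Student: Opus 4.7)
The plan is to exploit the defining property of a perfect $1$-factorization: for any two distinct perfect matchings $M_i, M_j$ in the collection $\{M_0, M_1, \ldots, M_{n-1}\}$, the union $M_i \sqcup M_j$ forms a Hamiltonian cycle $H$ of length $2n$ in $K_{n,n}$. This is stated explicitly just before the lemma and is the only structural fact about the $M_\ell$'s I intend to use.

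Given $H$, I would argue as follows. The subgraph of $K_{n,n}$ whose edge set is $M_i^* \sqcup M_j^*$ is, by construction, an edge-subgraph of $H$. Since $M_i^* \subsetneqq M_i$ and $M_j^* \subsetneqq M_j$ are proper subsets, there exist edges $e \in M_i \setminus M_i^*$ and $f \in M_j \setminus M_j^*$; thus the edge set $M_i^* \sqcup M_j^*$ omits at least two edges of $H$ and has size at most $2n-2$. Any cycle supported on edges of the simple cycle $H$ must in fact be $H$ itself (removing even a single edge from a simple cycle yields a path, which is acyclic). Since $M_i^* \sqcup M_j^* \subsetneqq H$, this subgraph cannot contain $H$ as a subset, and therefore cannot contain any cycle at all.

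There is no real obstacle here: the lemma is essentially a one-line consequence of the perfect $1$-factorization hypothesis, and the argument does not depend on any particular structure of the $M_i^*$'s beyond the fact that they are proper subsets of the respective matchings. The role of the lemma in what follows is to confirm that the partial coloring in~\eqref{eqn01} automatically avoids bichromatic cycles among the colors $\{c_0, \ldots, c_{n-1}\}$, so that the remaining work in the general framework is to color the edges in $M = \bigsqcup_i (M_i \setminus M_i^*)$ carefully.
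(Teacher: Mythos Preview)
Your proof is correct and is exactly the intended argument: the paper does not give an explicit proof of this lemma, treating it as an immediate consequence of the perfect $1$-factorization hypothesis, and your reasoning supplies precisely that justification. The key point---that $M_i^*\sqcup M_j^*$ is a proper edge-subset of the single Hamiltonian cycle $M_i\sqcup M_j$, hence acyclic---is all that is needed.
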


Ideally, to get an acyclic edge-coloring of $K_{n,n}$, our aim should be to use only two
more colors for coloring the (uncolored) edges in $M$, and thus attaining the lower
bound on $a'(K_{n,n})$. In other words, partition $M$ into $M^{(1)}$ and $M^{(2)}$,
if possible, in such a way that the induced subgraph of $M^{(\rho)}\sqcup M_i^*$
does not contain a cycle for any $0\leq i \leq n-1$ and $\rho\in\{1,2\}$. If such a partition
of $M$ exists, then the edges of $M^{(1)}$ and $M^{(2)}$ can be assigned distinct colors,
and one can easily see that the proposed edge-coloring is proper and acyclic.
These observations and our intuition suggest that to minimize the number of edges (size of $M$)
that are yet to be assigned colors. This can be done by taking $M_i^*$ to be a (proper) maximal
subset of $M_i$, {\em i.e.}, take $M_i^* = M_i \setminus \{ e_i\}$ for some $e_i \in M_i$.
Thus the size of the set $M$ will be (near)\footnote{One can take $M_i^*= M_i$
for exactly one $i\in I$ and other $M_i^*$'s as mentioned, and observe that
Lemma~\ref{lem00} holds true in such a case as well.} minimal.
Additionally, we have to make sure that a suitable partition of $M$ exists
satisfying the other requirements mentioned above.  Below we present a strategy to
choose $M$ (to be a perfect matching), so that the other requirements on $M$ can
possibly be worked out using permutations of the label set $I$.

As discussed in the previous section, we can interpret each perfect matching $M_i$
in $K_{n,n}$ as a permutation of the label set $I$ and let us denote it by $\pi_i$. 
Now, if possible, select one edge $e_i : (s_i\mapsto t_i)$ from each
$M_i,\,0 \leq i \leq (n-1),$ such that the set of edges $M = \{e_0, e_1, \ldots, e_{n-1}\}$
gives a perfect matching in $K_{n,n}$. Let $\pi$ be its corresponding permutation of $I$.
Then we have $M \cap M_i = \{ e_i\}$ for $1\le i\le (n-1)$.

\begin{remark}\label{rem_trans}
A perfect $1$-factorization of complete bipartite graph $K_{n,n}$ is equivalent to
a Hamiltonian latin square of order $n$. Our choice of perfect matching $M$ with
$M \cap M_i = \{ e_i\}$ for $1\le i\le (n-1)$ is equivalent to a transversal (of length
$n$). These concepts are well studied in literature (see~\cite[p. 135--151]{CD}).
It was conjectured by Ryser that every latin square of odd order has a transversal~\cite{brua}.
Such a transversal is also said to be rainbow matching (see~\cite{aha} for details).
In fact, it is enough to take a transversal of length $n-1$ as pointed out in the footnote
below. It was conjectured by Brualdi and Stein independently that every latin square of odd
order $n$ has a transversal of length $n-1$~\cite{brua,stein}. As far as we know it is not
known that in general a Hamiltonian latin square of order $n$ has a transversal (of length $n$ or $n-1$).
In fact, we need a transversal which satisfies an additional property as discussed below.
\end{remark}

Now set $M_i^* = M_i \setminus \{ e_i\}$ and assign colors as in~(\ref{eqn01}).
As discussed our aim is to use two additional colors for coloring the remaining edges
(given by $M$) and obtain an acyclic edge-coloring of $K_{n,n}$. For this purpose,
we need to partition $M$ into $M^{(1)}$ and $M^{(2)}$, in such a way that the induced
subgraph of $M^{(\rho)} \cup M_i^*$ does not contain a cycle for any $0 \leq i \leq n-1$ and
$\rho\in\{1,2\}$. In other words, we must have edges from both $M^{(1)}$ and $M^{(2)}$
in any cycle in the induced subgraph of $M\sqcup M_i^* = M\cup M_i$. Note that by
Lemma~\ref{lem01} the cycles of the induced subgraph of $M\cup M_i$ can be obtained
from the cycles of the permutation $\pi^{-1}\circ \pi_i$ in its disjoint cycle decomposition.
So in order to see such a partition of $M$ exists or not, we analyse cycle structure
of the permutations $(\pi^{-1}\circ \pi_i)$ for $0\le i\le (n-1)$. We now see that
such a partition of $M$, if exists, can be obtained from a partition of the label set $I$, and
it is due to the one-to-one correspondence between the label set and a perfect matching.

Suppose $(\pi^{-1}\circ \pi_i) = C_{i0} C_{i1} \cdots C_{i k_i}$ as a product
of disjoint cycles. Observe that there is exactly one common edge
$M \cap M_i = \{e_i: (s_i\mapsto t_i)\}$, and so $s_i \in I$ is the only fixed element
in the permutation $\pi^{-1}\circ \pi_i$, {\em i.e.}, $(\pi^{-1}\circ \pi_i)(s_i) = s_i$
and $(\pi^{-1}\circ \pi_i)(s) \neq s$ for any $s\,(\neq s_i)\in I$.
Let us take $C_{i0} = (s_i)$ and let $\ell_{ij}$ be the length of the cycle $C_{ij},\,1\le j\le k_i$.
Then we must have $\ell_{ij} \ge 2$ for $0\le i\le n-1$ and $1\le j\le k_i$. Note that for a cycle
$C_{ij}$ in the disjoint cycle decomposition of these permutations, its corresponding cycle
$C_{ij}^g$ in $K_{n,n,}$ is of length $2\ell_{ij}$; half of the edges are from $M$ and the
other half are from $M_i^*$ (see Lemma~\ref{lem01}). We now try to partition $I$ into
$I^{(1)}$ and $I^{(2)}$ by analysing all the cycles $C_{ij},\,0\le i\le n-1\,\mbox{and}\,1\le j\le k_i$,
such that at least one element from both $I^{(1)}$ and $I^{(2)}$ appear in the representations of all
those cycles $C_{ij}$. If such a partition of $I$ exists, then the corresponding partition
of $M$ is given by  $M^{(1)} = \{(m\mapsto \pi(m)) : m\in I^{(1)}\}$ and
$M^{(2)} = \{(m \mapsto \pi(m)) : m\in I^{(2)}\}$, and by Lemma~\ref{lem01} we can
see that the partition of $M$ into $M^{(1)}$ and $M^{(2)}$ gives the required result.
In general, if $K_{n,n}$ possesses a perfect $1$-factorization, the difficulty is to identify
a suitable perfect matching that can help to get an acyclic edge-coloring of $K_{n,n}$ using only $n+2$
colors. Let us illustrate the technique by considering the case $K_{p,p}$, where $p$ is an odd prime.

\subsection{The case of $K_{p,p}$ for an odd prime $p$}\label{sec3p1}
This case was studied in~\cite{ba2} and we present here a slight variant of it.
Take $M_i$ to be the perfect matching corresponding to the permutation
$\pi_i: a \mapsto a+i \pmod{p}$ for $0\leq i \leq p-1$.  We can see that the decomposition
$\{M_i,\,0\le i\le p-1\}$ gives a perfect $1$-factorization of $K_{p,p}$. Now consider
$M$ to be the perfect matching given by the permutation $\pi: a \mapsto ax \pmod{p}$,
where $x$ is a generator of $\mathbb{Z}_p^{*}$. Let $y$ be the multiplicative inverse
of $x$ in $\mathbb{Z}_p^*$. Note that $order(x) = order(y) =p-1$ in $\mathbb{Z}_p^*$.
We can easily check that $M\cap M_i = \{e_i: (\frac{i}{x-1}\mapsto \frac{ix}{x-1})\}$.
We also have the following.
\begin{itemize}
\item[i)] $\pi^{-1} \circ \pi_0 = \pi^{-1} = C_{00}C_{01}$, where $C_{i0} = (0)$
represents the common edge $e_0: (0\mapsto 0)$ and $C_{01} = (1\,y\,y^2\,\cdots\,y^{p-2})$
is a cycle of length $(p-1)$ containing $1$.
\item[ii)] $\pi^{-1} \circ \pi_i = C_{i0}C_{i1}$ for $1\le i\le p-1$, where
$C_{i0} = (\frac{i}{x-1})$ represents the common edge $e_i : (\frac{i}{x-1}\mapsto \frac{ix}{x-1})$
and $C_{i1} = (0\ iy\ i(y^2+y)\,\cdots\,i(y^{p-3}+\cdots+y)\ -i)$ is a cycle of length
$(p-1)$ containing $0$.
\end{itemize}
Therefore we can get the required result with a partition of $I$ into $I^{(2)} = \{0,1\}$ and
$I^{(1)} = I\setminus I^{(2)}$. Then the corresponding partition of $M$ is given by
$M^{(1)}=M \setminus \{(0\mapsto 0), (1\mapsto x)\}$ and
$M^{(2)}= \{(0\mapsto 0), (1\mapsto x)\}$. Observe that the cycle $C_{i1}^g$ of $K_{n,n}$
corresponding to $C_{i1},\,1\le i\le (p-1),$ contains exactly one edge $(0 \mapsto 0)$
which belong to $M^{(2)}$ and  $C_{01}^g$ contains the edge $(1\mapsto x)\in M^{(2)}$.
The other $(p-2)$ edges of the cycle $C_{i1}^g,\,0\le i\le p-1,$ belong to $M^{(1)}$.
Now the final assignment of the colors is as follows:
\begin{itemize}
\item[--] the edges in $M_i^*$ are colored with $c_i$ for $i\in I$;
\item[--] the edges in $M^{(1)}$ are colored with $c_p$;
\item[--] the edges in $M^{(2)}$ are colored with $c_{p+1}$.
\end{itemize}
From the above discussion and by Lemma~\ref{lem00}, it is clear that the proposed
edge-coloring (with $p+2$ colors) of $K_{p,p}$ is proper and acyclic.

Note that the proposal in~\cite{ba2} is $M_0^* = M_0$ and $M^{(2)} = \{(1\mapsto x)\}$
and one can easily see that the result is still valid with such a choice as well.

\section{The case of $K_{p^2,\,p^2}$ for an odd prime $p \ge 5$}\label{sec4}
In this section we  provide an acyclic edge-coloring of $K_{p^2,p^2}$ with $p^2+2$ colors,
where $p$ is an odd prime $\ge 5$. We follow the general framework described in the
previous section. Accordingly we now summarize the set-up in this case.
We use elements of $\mcI = \{ (a,b) : a,b\in \mbZ \}$ for labelling the
vertices of $K_{p^2,p^2}$ on both sides. Let $x$ be a generator of $\mbZs$ and let $y$ be
its inverse. Observe that $order(x) = order(y) = p-1$ in $\mbZs$. 
We consider the following.
\begin{itemize}
\item[--] Let $M_{(a,b)}$ be the perfect matching corresponding to the permutation $\pi_{(a,b)}$
of the label set $\mcI$ defined by
\begin{equation*}
\pi_{(a,b)}\big((c,d)\big)  = \left\{
\begin{array}{ll}
(a,a+b+d) & \mbox{if}\ c=0\ \mbox{and}\ a+b+d\neq 0\\
(a+x b,0) & \mbox{if}\ c=0\ \mbox{and}\ a+b+d= 0\\
(a+c + x b,0) & \mbox{if}\ c\neq 0\ \mbox{and}\ b+d =  0\\
(a+c,b+d) & \mbox{if}\ c\neq 0\ \mbox{and}\ b+d\neq 0
\end{array}\right.
\end{equation*}
Then from~\cite{br} (with $\alpha = 1$ and $\beta = x$), we can see that the perfect
matchings $\{ M_{(a,b)},\, (a,b)\in \mcI \}$ form a perfect $1$-factorization of $K_{p^2,\,p^2}$.
\item[--] We choose the perfect matching $M$ corresponding to the permutation defined by
\begin{equation*}
\pi \big((c,d)\big) = (yc,xd);
\end{equation*}
\item[--] We choose the following partition of the label set $\mcI$:
\[\mcI^{(2)} = \{(0,1),(1,0),(z,z),(z,zx)\ \mbox{for}\ z\in \mbZs\}\ 
\mbox{and}\ \mcI^{(1)} = \mcI \setminus \mcI^{(2)}\]
Then the corresponding partition of
$M= M^{(1)} \sqcup M^{(2)}$ is given by
\begin{align*}
M^{(1)} &= \{ (c,d) \mapsto  (yc,xd) : (c,d) \in \mcI^{(1)}\},\\
M^{(2)} &= \{ (c,d) \mapsto (yc,xd) : (c,d) \in \mcI^{(2)}\}.
\end{align*}
We have $| \mcI^{(2)} | = |M^{(2)}| = 2p$ and $|\mcI^{(1)}| = |M^{(1)}| = p^2-2p$. 
\end{itemize}

Let $M_{(a,b)}^* = M_{(a,b)} \setminus M$ for $(a,b) \in \mcI$.
The edge-coloring of $K_{p^2,p^2}$ that we consider is as follows:
\begin{itemize}
\item[--] the edges in $M_{(a,b)}^*$ are colored with $c_{ap+b}$ for $(a,b) \in \mcI$;
\item[--] the edges in $M^{(1)}$ are colored with $c_{p^2}$ ;
\item[--] the edges in $M^{(2)}$ are colored with $c_{p^2+1}$.
\end{itemize}
According to the general framework discussed in the previous section, the two requirements that need to
be satisfied to establish the above edge-coloring of $K_{p^2,p^2}$ is proper and acyclic are as follows:
\begin{itemize}
\item[--] for $(a,b)\in \mcI$, there is exactly one fixed element in the permutation $\pi^{-1}\circ \pi_{(a,b)}$.
That is there is exactly one edge common to both $M$ and $M_{(a,b)}$;
\item[--] for $(a,b)\in \mcI$,
elements from both $ \mcI^{(1)}$ and $\mcI^{(2)}$ must appear in the representation of the cycles
of length $\ge 2$ in the disjoint cycle decomposition of the permutation $\pi^{-1}\circ \pi_{(a,b)}$.
\end{itemize}
Let us now prove that the above two requirements are satisfied in our set-up.
For brevity of expression, we sometimes use the following notation.
\begin{align*}
x' &= \frac{1}{x-1} = \frac{y}{1-y}\ \mbox{and}\ x_i = x+x^2 +\cdots + x^i\ \mbox{for}\ i = 1, 2,\ldots,p-2,\\
y' &= \frac{1}{y-1} = \frac{x}{1-x}\ \mbox{and}\ \,y_i = y+y^2 +\cdots + y^i\ \mbox{for}\ i = 1, 2,\ldots,p-2.
\end{align*}
\begin{proposition}\label{prop41}
For $(a,b)\in \mcI$, we have $ | M\cap M_{(a,b)} |=1$.
\end{proposition}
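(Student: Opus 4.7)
The plan is to reinterpret the statement as a fixed-point count. An edge of $K_{p^2,p^2}$ lies in $M \cap M_{(a,b)}$ if and only if it equals $(c,d)\mapsto \pi(c,d) = \pi_{(a,b)}(c,d)$ for some $(c,d)\in\mcI$; equivalently, $(c,d)$ is a fixed point of $\pi^{-1}\circ \pi_{(a,b)}$. So it suffices to show that for every $(a,b)\in\mcI$, the equation $\pi_{(a,b)}(c,d) = (yc,xd)$ has a unique solution $(c,d)\in\mcI$.

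My main tool is the piecewise definition of $\pi_{(a,b)}$: I examine each of its four branches and solve the resulting linear system over $\mbZ$. The branch $c=0,\ a+b+d\neq 0$ yields $(a,a+b+d)=(0,xd)$, forcing $a=0$ and $d=bx'$, with the side condition $a+b+d\neq 0$ reducing to $b\neq 0$. The branch $c=0,\ a+b+d=0$ gives $(a+xb,0)=(0,xd)$, forcing $d=0$, $a+xb=0$, and $a+b=0$, hence $a=b=0$. The branch $c\neq 0,\ b+d=0$ gives $(a+c+xb,0)=(yc,xd)$, forcing $d=0$, $b=0$, $c=ay'$, requiring $a\neq 0$. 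Finally, the branch $c\neq 0,\ b+d\neq 0$ gives $(a+c,b+d)=(yc,xd)$, forcing $c=ay'$ and $d=bx'$, requiring $a\neq 0$ and $b\neq 0$. Note that $x'$ and $y'$ are well-defined since $x\neq 1$ in $\mbZs$ (indeed $\ord(x)=p-1\ge 4$).

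Now I partition $\mcI$ according to whether $a$ and $b$ vanish. In each of the four sectors $\{a=b=0\}$, $\{a=0,b\neq 0\}$, $\{a\neq 0,b=0\}$, $\{a\neq 0,b\neq 0\}$, exactly one branch supplies a valid solution, namely $(c,d)=(0,0)$, $(0,bx')$, $(ay',0)$, and $(ay',bx')$ respectively, while the side conditions of the other three branches are incompatible with the sector. This proves both existence and uniqueness. The step that requires the most care is checking that each candidate solution really satisfies the side condition of its own branch; for instance, in the first branch one must verify $a+b+d = b(1+x') = b x \cdot x'\neq 0$ whenever $b\neq 0$, and a similar check is needed in the fourth branch. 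These verifications are routine linear algebra over $\mbZ$, so I do not anticipate any deeper obstacle.
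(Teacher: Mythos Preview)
Your proposal is correct and follows essentially the same approach as the paper: both set $\pi_{(a,b)}(c,d)=\pi(c,d)=(yc,xd)$ and run through the four branches of the piecewise definition of $\pi_{(a,b)}$, arriving at the unique common edge $(ay',bx')\mapsto(-ax',-by')$. The paper's proof is terse (``by checking the four cases \ldots\ we get''), whereas you have spelled out the side-condition verifications and the sector-by-sector partition of $(a,b)$ explicitly; this is exactly the routine work the paper leaves to the reader.
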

\begin{proof}
An edge $(c,d) \mapsto (c',d')$ is common to both $M$ and $M_{(a,b)}$ if and only if
$(c',d') = \pi_{(a,b)}\big( (c,d)\big) = \pi\big( (c,d)\big)$. Therefore by checking the four cases
\begin{align*}
(a,a+b+d) &= (0,xd)\ \quad \mbox{if}\ c=0\ \mbox{and}\ a+b+d\neq 0\\
(a+x b,0) &= (0,xd) \ \quad \mbox{if}\ c=0\ \mbox{and}\ a+b+d= 0\\
(a+c + x b,0) &= (yc,xd) \quad \mbox{if}\ c\neq 0\ \mbox{and}\ b+d =  0\\
(a+c,b+d) &= (yc,xd) \quad \mbox{if}\ c\neq 0\ \mbox{and}\ b+d\neq 0
\end{align*}
 for $(a,b)\in \mcI$, we get 
\[M\cap M_{(a,b)} = \{ (ay',bx') \mapsto (-ax',-by' )\}\]
and hence the proof.\qed\\
\end{proof}

The above proposition shows that the first requirement is satisfied in our set-up. In what follows we
prove that the other requirement is also satisfied. For this purpose, we now analyse cycle structure
of the permutations $\pi^{-1} \circ \pi_{(a,b)}$ for $(a,b)\in \mcI$.
The inverse permutation of $\pi$ is given by $\pi^{-1}\big((c,d)\big) = (xc,yd)$. So we get
\begin{equation*}
\pi^{-1} \circ \pi_{(a,b)} \big((c,d)\big)  = \left\{
\begin{array}{ll}
(xa,y(a+b+d)) & \mbox{if}\ c=0\ \mbox{and}\ a+b+d\neq 0\\
(xa+x^2 b,0) & \mbox{if}\ c=0\ \mbox{and}\ a+b+d= 0\\
(x(a+c) + x^2 b,0) & \mbox{if}\ c\neq 0\ \mbox{and}\ b+d =  0\\
(x(a+c),y(b+d)) & \mbox{if}\ c\neq 0\ \mbox{and}\ b+d\neq 0
\end{array}\right.
\end{equation*}
The above permutation can be decomposed into the following three permutations.
\begin{equation*}
\pi_{(a,b)}^{(0)} \big((c,d)\big) = (x(a+c),y(b+d))
\end{equation*}
\begin{equation*}
\pi_{(a,b)}^{(1)} \big((c,d)\big)  = \left\{
\begin{array}{ll}
(c,ya+d) & \mbox{if}\ c=xa\\
(c,d) & \mbox{otherwise}
\end{array}\right.
\end{equation*}
\begin{equation*}
\pi_{(a,b)}^{(2)} \big((c,d)\big)  = \left\{
\begin{array}{ll}
(c+x^2 b,d) & \mbox{if}\ d=0\\
(c,d) & \mbox{otherwise}
\end{array}\right.
\end{equation*}
\begin{proposition}\label{prop42}
For $(a,b)\in \mcI$, we have
\[ \pi^{-1} \circ \pi_{(a,b)} = \pi_{(a,b)}^{(2)}\circ \pi_{(a,b)}^{(1)}\circ \pi_{(a,b)}^{(0)}\]
\end{proposition}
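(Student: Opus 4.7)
The plan is to prove the identity pointwise by direct case analysis, matching the four-case definition of $\pi^{-1}\circ \pi_{(a,b)}$ against the effect of successively applying $\pi_{(a,b)}^{(0)}$, then $\pi_{(a,b)}^{(1)}$, then $\pi_{(a,b)}^{(2)}$ to an arbitrary label $(c,d)\in \mcI$. Since the right-hand side is a composition of three explicitly defined maps, nothing deep is required: I only have to verify that in each of the four cases the triggering conditions for $\pi_{(a,b)}^{(1)}$ (first coordinate equal to $xa$) and for $\pi_{(a,b)}^{(2)}$ (second coordinate equal to $0$) fire exactly when the corresponding branch of $\pi^{-1}\circ \pi_{(a,b)}$ calls for the extra $ya$ or $x^2b$ correction.

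Concretely, I would first apply $\pi_{(a,b)}^{(0)}$ to $(c,d)$ to obtain $(x(a+c),y(b+d))$, and then split according to whether $c=0$. If $c=0$, the first coordinate becomes $xa$, so $\pi_{(a,b)}^{(1)}$ fires and adds $ya$ to the second coordinate, producing $(xa,y(a+b+d))$. If $c\neq 0$, then $x(a+c)\neq xa$ (using that $x\in \mbZs$), so $\pi_{(a,b)}^{(1)}$ acts as the identity. After this, I split on whether the second coordinate is $0$: this happens precisely when $y(a+b+d)=0$ in the $c=0$ subcase, i.e.\ $a+b+d=0$, or when $y(b+d)=0$ in the $c\neq 0$ subcase, i.e.\ $b+d=0$. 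In those two vanishing subcases, $\pi_{(a,b)}^{(2)}$ adds $x^2 b$ to the first coordinate, yielding $(xa+x^2b,0)$ and $(x(a+c)+x^2 b,0)$ respectively; in the other two subcases $\pi_{(a,b)}^{(2)}$ is the identity, leaving $(xa,y(a+b+d))$ and $(x(a+c),y(b+d))$. A direct comparison with the four cases in the displayed formula for $\pi^{-1}\circ \pi_{(a,b)}$ then finishes the argument.

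There is no real obstacle here; the proof is essentially a bookkeeping exercise. The only points to be careful about are that $x,y\in \mbZs$ so $y(a+b+d)=0 \iff a+b+d=0$ and $y(b+d)=0 \iff b+d=0$ (so that the branching conditions on the right-hand side exactly match those on the left), and that the triggering conditions of $\pi_{(a,b)}^{(1)}$ and $\pi_{(a,b)}^{(2)}$ are mutually compatible, that is, $\pi_{(a,b)}^{(1)}$ is evaluated on the output of $\pi_{(a,b)}^{(0)}$ and $\pi_{(a,b)}^{(2)}$ on the output of $\pi_{(a,b)}^{(1)}\circ \pi_{(a,b)}^{(0)}$, not on the original $(c,d)$. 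Laying out the four cases in a short table or aligned display should make the identity transparent.
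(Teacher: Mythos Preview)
Your proposal is correct and follows essentially the same approach as the paper's proof: compute $\pi_{(a,b)}^{(1)}\circ\pi_{(a,b)}^{(0)}$ by splitting on whether $c=0$ (equivalently $x(a+c)=xa$), then apply $\pi_{(a,b)}^{(2)}$ by splitting each case on whether the second coordinate vanishes, and match the resulting four cases against the displayed formula for $\pi^{-1}\circ\pi_{(a,b)}$. Your explicit remark that $y\in\mbZs$ forces $y(a+b+d)=0\iff a+b+d=0$ and $y(b+d)=0\iff b+d=0$ is exactly the small observation needed to align the branching conditions.
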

\begin{proof}
Note that 
\[ \pi_{(a,b)}^{(1)}\big( \pi_{(a,b)}^{(0)} \big((c,d)\big) \big) = \left\{
\begin{array}{ll}
(x(a+c),ya+y(b+d)) & \mbox{if}\  x(a+c) =xa\  ( \Leftrightarrow c = 0)\\
(x(a+c),y(b+d)) & \mbox{otherwise}
\end{array}\right.
\]
and $\pi_{(a,b)}^{(2)}$ splits each of above two cases into two subcases depending on whether the
second component is zero or not. Thus we get
\[ \pi_{(a,b)}^{(2)}\big( \pi_{(a,b)}^{(1)}\circ \pi_{(a,b)}^{(0)} \big((c,d)\big)  \big)
= \left\{ \begin{array}{ll}
(xa,ya+y(b+d)) & \mbox{if}\ c=0\ \mbox{and}\ a+b+d\neq 0\\
(xa+x^2 b,0) & \mbox{if}\ c=0\ \mbox{and}\ a+b+d= 0\\
(x(a+c) + x^2 b,0) & \mbox{if}\ c\neq 0\ \mbox{and}\ b+d =  0\\
(x(a+c),y(b+d)) & \mbox{if}\ c\neq 0\ \mbox{and}\ b+d\neq 0
\end{array}\right.\]
and hence the proof.
\qed\\
\end{proof}

With the above decomposition and the following result, analysis of the cycle structures can be
simplified which we will see later.

\begin{proposition}\label{prop43}
Let $\sigma \big((c,d)\big) = (yc,yd)$, then we have
\[\pi^{-1} \circ \pi_{(xa,xb)} = \sigma^{-1} \circ ( \pi^{-1} \circ  \pi_{(a,b)}) \circ \sigma.\]
\end{proposition}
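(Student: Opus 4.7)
The plan is to use Proposition~\ref{prop42} together with the fact that conjugation distributes over composition. Namely, I would first establish the three factor-wise conjugation identities
\begin{equation*}
\sigma^{-1} \circ \pi_{(a,b)}^{(i)} \circ \sigma \;=\; \pi_{(xa,\,xb)}^{(i)}, \qquad i = 0, 1, 2.
\end{equation*}
Once these are in hand, inserting $\sigma \circ \sigma^{-1}$ between adjacent factors and applying Proposition~\ref{prop42} at both $(a,b)$ and $(xa,xb)$ gives
\begin{equation*}
\sigma^{-1} \circ (\pi^{-1}\circ \pi_{(a,b)}) \circ \sigma \;=\; \pi_{(xa,xb)}^{(2)} \circ \pi_{(xa,xb)}^{(1)} \circ \pi_{(xa,xb)}^{(0)} \;=\; \pi^{-1}\circ \pi_{(xa,xb)},
\end{equation*}
which is the claim.

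Each factor-wise identity reduces to a short direct computation using $xy = 1$ in $\mbZs$ together with $\sigma^{-1}\big((u,v)\big) = (xu,xv)$. For $i=0$, one tracks $(c,d) \mapsto (yc,yd) \mapsto (x(a+yc),\,y(b+yd)) = (xa+c,\,yb+y^2 d) \mapsto (x^2a+xc,\,b+yd)$, which coincides with $\pi^{(0)}_{(xa,xb)}\big((c,d)\big) = (x(xa+c),\,y(xb+d))$. For $i=1$, the trigger condition $yc = xa$ imposed on $\sigma\big((c,d)\big)$ is equivalent to $c = x^2 a$, which is exactly the trigger for $\pi^{(1)}_{(xa,xb)}$; when it fires, the additive term $ya$ in the second coordinate becomes $xya = a$ after $\sigma^{-1}$, matching the increment $y(xa) = a$ in $\pi^{(1)}_{(xa,xb)}$. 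For $i=2$, the trigger $d=0$ is preserved under $\sigma$, and the shift $x^2 b$ in the first coordinate becomes $x\cdot x^2 b = x^3 b = x^2(xb)$ after $\sigma^{-1}$, matching the shift in $\pi^{(2)}_{(xa,xb)}$.

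Conceptually, the reason this works is transparent: $\sigma$ rescales the input coordinates by the scalar $y$, while the parameter substitution $(a,b) \mapsto (xa,xb)$ rescales $(a,b)$ by $x$, and since $x$ and $y$ are inverses in $\mbZs$ these two rescalings cancel inside each of the three piecewise-defined factors. There is no real obstacle; the only bookkeeping danger is checking that the trigger conditions $c=0$, $d=0$, $c=xa$ in the piecewise definitions pull back cleanly under $\sigma$, which they do because $\sigma$ is a coordinate-wise scaling by a unit. An alternative, more tedious route would be to verify the identity by a four-case analysis directly on the formula for $\pi^{-1}\circ \pi_{(a,b)}$, but the factor-wise route is cleaner because each $\pi^{(i)}_{(a,b)}$ has at most two cases.
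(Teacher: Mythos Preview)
Your argument is correct, and it differs from the paper's own proof. The paper verifies the identity by a direct four-case computation on the composite $\pi^{-1}\circ\pi_{(a,b)}$ itself: it evaluates $\pi^{-1}\circ\pi_{(a,b)}$ on $\sigma\big((c,d)\big)=(yc,yd)$ in each of the four branches, applies $\sigma^{-1}$, and checks that the resulting four-case formula coincides with that of $\pi^{-1}\circ\pi_{(xa,xb)}$. In other words, the paper takes exactly the ``alternative, more tedious route'' you mention at the end. Your approach instead leverages Proposition~\ref{prop42} and reduces the verification to three factor-wise identities, each with at most two branches; this is a bit more structural and makes the reason for the conjugacy (unit rescaling of both coordinates and parameters) more transparent, at the cost of depending on the decomposition lemma. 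The paper's direct route is self-contained and requires no prior result, but involves tracking more branch conditions simultaneously. Both are routine; your version is a little cleaner.
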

\begin{proof}
We have 
\[ \pi^{-1} \circ  \pi_{(a,b)} \big(\sigma \big((c,d)\big)\big)  = \left\{
\begin{array}{ll}
(xa,y(a+b+yd)) & \mbox{if}\ c=0\ \mbox{and}\ a+b+yd\neq 0\\
(xa+x^2 b,0) & \mbox{if}\ c=0\ \mbox{and}\ a+b+yd= 0\\
(x(a+yc) + x^2 b,0) & \mbox{if}\ c\neq 0\ \mbox{and}\ b+yd =  0\\
(x(a+yc),y(b+yd)) & \mbox{if}\ c\neq 0\ \mbox{and}\ b+yd\neq 0
\end{array}\right.\]
and so we get
\[ \sigma^{-1} \big( \pi^{-1} \circ  \pi_{(a,b)}\circ \sigma \big((c,d)\big)\big)  = \left\{
\begin{array}{ll}
(x^2 a,y(xa+xb+d) & \mbox{if}\ c=0\ \mbox{and}\ xa+xb+d\neq 0\\
(x^2 a+x^3 b,0) & \mbox{if}\ c=0\ \mbox{and}\ xa+xb+d= 0\\
(x^2a+c + x^3 b,0) & \mbox{if}\ c\neq 0\ \mbox{and}\ xb+d =  0\\
(x^2 a+c,y(xb+d)) & \mbox{if}\ c\neq 0\ \mbox{and}\ xb+d\neq 0
\end{array}\right.\]
Now one can check that this is equal to $\pi^{-1} \circ \pi_{(xa,xb)}$.
\qed\\
\end{proof}

Thus for $(a,b)\in \mcI$, the permutations $\pi^{-1} \circ \pi_{(x^i a,x^i b)},\, i\in \{0,1,\ldots,p-2\},$
are all conjugates of each other, and so they all have same cycle structure. In fact we
get the disjoint cycle decomposition of $\pi^{-1} \circ \pi_{(x^i a,x^i b)}$ by the symbol
transformation $\sigma^i$, {\em i.e.}, replacing the symbols $(c,d)$
by $(y^ic,y^id)$ in the disjoint cycle decomposition of $\pi_{(a,b)}$.
Therefore it is enough to study the cycle structure of $\pi^{-1} \circ \pi_{(a',b')}$
for $(a',b') \in \mcI' = \{(0,0),(0,1),(1,0),(1,b)\ \mbox{for}\ b\in\mbZs\}$.

We now analyse the cycle structure of $\pi^{-1} \circ \pi_{(a,b)}$ for $(a,b)\in \mcI$
by dividing into four cases: $(0,0),(*,0),(0,*),(*,*)$, where $*$ represents nonzero
elements of $\mbZ$. We discuss these cases one by one and we show that elements
from both $ \mcI^{(1)}$ and $ \mcI^{(2)}$ appear in the representation of the cycles
(of length $\ge 2$) in the disjoint cycle decomposition of $\pi^{-1}\circ \pi_{(a,b)}$
for elements $(a,b)$ in each of these four cases.

In Section~\ref{sec3} we have mentioned cycles with single element explicitly to
emphasize that there is exactly one fixed element. As we have already proved it in
Proposition~\ref{prop41}, for convenience we follow the convention and in
the discussion below we do not explicitly mention cycles with single element
in the disjoint cycle decomposition of permutations. Accordingly we count only
the cycles of length $\ge 2$ in the disjoint cycle decomposition. For simplicity,
we also use some common notation in presenting the disjoint cycle decomposition
in each of these cases.\\

\noindent{\bf Case}: $a=0\ \mbox{and}\ b=0$

The permutation $ \pi_{(0,0)}$ is the identity map, and so
$\pi^{-1} \circ \pi_{(0,0)} = \pi^{-1}$. 
We can see from the definition that the disjoint cycle decomposition of $\pi^{-1}$ can be given by
\[ \pi^{-1} = C_0 C_1 \cdots C_{p-1} C_p,\]
where the $p+1$ cycles are given by
\begin{align*}
% C_0 &= \big( (0,1)\, (0,y)\, (0,y^2)\cdots (0,y^{p-2}) \big)\\
C_j &= \big( (j,1)\, (jx,y)\, (jx^2,y^2)\cdots (jx^{p-2},y^{p-2}) \big)\quad 
\mbox{for}\ j\in \mbZ\\
C_p &=  \big( (1,0)\, (x,0)\, (x^2,0)\cdots (x^{p-2},0) \big)
\end{align*}
and it is evident that they are of length $p-1$. 
The missing element $(0,0)$ is fixed by $\pi^{-1}$.

Note that the cycle $C_0$ contains exactly one element $(0,1)\in \mcI^{(2)}$
and other elements belong to $\mcI^{(1)}$, and also the cycle $C_p$ contains exactly
one element $(1,0)\in \mcI^{(2)}$ and other elements belong to $\mcI^{(1)}$.

For $j\in \mbZs$, the elements of the cycle $C_j$ are of the form $(jx^i,y^i)$
for $i=0,1,\ldots,p-2$. Since $y$ is also a generator of $\mbZs$ we have either
$j = y^{2s}$ or $j = y^{2s+1}$ for some $s \in \{0,1,\ldots,\frac{p-1}{2}-1\}$.
Note that $x^\frac{p-1}{2} = y^\frac{p-1}{2} = -1$.
\begin{itemize}
\item[--] If $j=y^{2s}$ (is a square),
then we have $jx^s = y^s$ and $jx^{s+\frac{p-1}{2}} = y^{s+\frac{p-1}{2}}$,
and observe that $(jx^s,y^s)$ and $(jx^{s+\frac{p-1}{2}},y^{s+\frac{p-1}{2}})$
are the only elements of $C_j$ which belong to $\mcI^{(2)}$ and they are of the form $(z,z)$.
The remaining $(p-3)$ elements of $C_j$ belong to $\mcI^{(1)}$.
\item[--] If $j=y^{2s+1}$ (is a non-square), then we have
$jx^s = y\cdot y^s$ and $jx^{s+\frac{p-1}{2}} = y \cdot y^{s+\frac{p-1}{2}}$,
and observe that $(jx^s,y^s)$ and $(jx^{s+\frac{p-1}{2}},y^{s+\frac{p-1}{2}})$
are the only elements of $C_j$ which belong to $\mcI^{(2)}$ and they are of the form $(z,zx)$.
The remaining $(p-3)$ elements of $C_j$ belong to $\mcI^{(1)}$.
\end{itemize}

\noindent{\bf Case}: $a=0\ \mbox{and}\ b\in \mbZs$

By Proposition~\ref{prop43} it is enough to study the cycle structure of $\pi^{-1} \circ \pi_{(0,1)}$.
For $b\in \mbZs$, the disjoint cycle decomposition of $\pi^{-1} \circ \pi_{(0,b)}$ can be obtained
by the symbol transformation: replacing $(c,d)$ by $(bc,bd)$ in the disjoint cycle decomposition of
$\pi^{-1} \circ \pi_{(0,1)}$.

We have \[ \pi_{(0,1)}^{(0)} \big((c,d)\big) = (xc,y(1+d)).\]
Then we can see that
\[ \pi_{(0,1)}^{(0)} = C_0 C_1 \cdots C_{p-1} C_p,\]
where the $p+1$ cycles are given by
\begin{align}\label{eqnae0b}
% C_0 &= \big( (0,0)\, (0,y)\, (0,y+y^2)\cdots (0,y+\cdots+y^{p-2}) \big)\nonumber\\ 
C_j &= \big( (jx^2,0)\, (jx^3,y)\, (jx^4,y+y^2)\cdots (jx,y+\cdots+y^{p-2}) \big)\quad 
\mbox{for}\ j\in \mbZ\nonumber\\
C_p  &=  \big( (1,x')\, (x,x')\, (x^2,x')\cdots (x^{p-2},x') \big),
\end{align}
and it is evident that they are of length $p-1$. 
The missing element $(0,x')$ is fixed by $\pi_{(0,1)}^{(0)}$.

We get $\pi_{(0,1)}^{(1)} \big((c,d)\big) = (c,d)$ since $a=0$, and so $\pi_{(0,1)}^{(1)}$
is the identity map. We have
\[\pi_{(0,1)}^{(2)} \big((c,d)\big) = \left\{\begin{array}{ll}
(c+x^2,0) & \mbox{if}\ d=0\\
(c,d) & \mbox{otherwise}
\end{array}\right.\]
So it has only one cycle given by
\[ \pi_{(0,1)}^{(2)} = C^{(2)} = \big( (0,0)\,(x^2,0)\,(2x^2,0)\cdots ((p-1)x^2,0))\big),\]
it is of length $p$ and all other elements are fixed.

Therefore we have
\[  \pi^{-1} \circ \pi_{(0,1)} = \pi_{(0,1)}^{(2)}\circ \pi_{(0,1)}^{(1)}\circ \pi_{(0,1)}^{(0)}
= C^{(2)} C_0 C_1 \cdots C_{p-1} C_p.\]
Observe that an element $(kx^2,0)$ of the cycle $C^{(2)}$ appears exactly in one cycle $C_j$
(when $j = k$). So the product $C^{(2)} C_0 C_1 \ldots C_{p-1}$ will form a single cycle of length
$p(p-1)$, and Figure~\ref{fig5} shows how the cycles are joined together to form a single cycle.

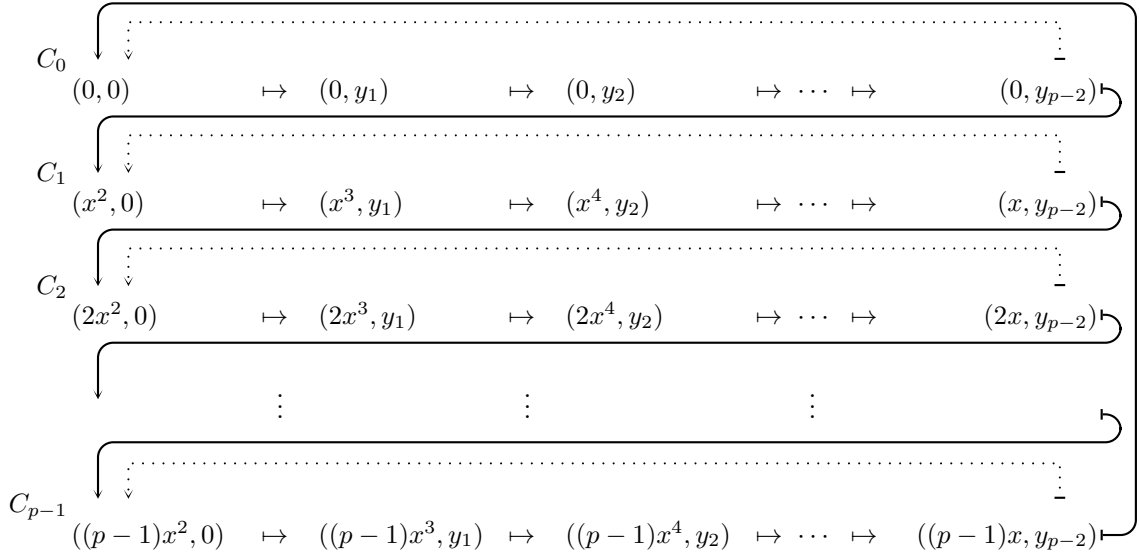
\begin{figure}[ht]
\centering{
{\small
\scalebox{1}
{
\begin{pspicture}(0,1.1)(15.6,8.1)
% column 0
\rput[Br](0.95,7.4){$C_0 $}
\rput[Br](0.95,5.9){$C_1 $}
\rput[Br](0.95,4.4){$C_2 $}
\rput[Br](0.95,1.5){$C_{p-1} $}

% column 1
\rput[Bl](1,7){$(0,0)$}
\rput[Bl](1,5.5){$(x^2,0)$}
\rput[Bl](1,4){$(2x^2,0)$}
\rput[Bl](1,1.1){$((p-1)x^2,0)$}

\rput[Bl](3.5,7){$\mapsto$}
\rput[Bl](3.5,5.5){$\mapsto$}
\rput[Bl](3.5,4){$\mapsto$}
\rput[Bl](3.5,1.1){$\mapsto$}
\rput[B](3.75,2.75){$\vdots$}

% column 2
\rput[Bl](4.25,7){$(0,y_1)$}
\rput[Bl](4.25,5.5){$(x^3,y_1)$}
\rput[Bl](4.25,4){$(2x^3,y_1)$}
\rput[Bl](4.25,1.1){$((p-1)x^3,y_1)$}

\rput[Bl](6.75,7){$\mapsto$}
\rput[Bl](6.75,5.5){$\mapsto$}
\rput[Bl](6.75,4){$\mapsto$}
\rput[Bl](6.75,1.1){$\mapsto$}
\rput[B](7,2.75){$\vdots$}

% column 3
\rput[Bl](7.5,7){$(0,y_2)$}
\rput[Bl](7.5,5.5){$(x^4,y_2)$}
\rput[Bl](7.5,4){$(2x^4,y_2)$}
\rput[Bl](7.5,1.1){$((p-1)x^4,y_2)$}

\rput[Bl](10,7){$\mapsto$}
\rput[Bl](10,5.5){$\mapsto$}
\rput[Bl](10,4){$\mapsto$}
\rput[Bl](10,1.1){$\mapsto$}
\rput[B](10.75,2.75){$\vdots$}

\rput[Bl](11.25,7){$\mapsto$}
\rput[Bl](11.25,5.5){$\mapsto$}
\rput[Bl](11.25,4){$\mapsto$}
\rput[Bl](11.25,1.1){$\mapsto$}

\rput[Bl](10.55,7){$\cdots$}
\rput[Bl](10.55,5.5){$\cdots$}
\rput[Bl](10.55,4){$\cdots$}
\rput[Bl](10.55,1.1){$\cdots$}

% column 4
\rput[Br](14.5,7){$(0,y_{p-2})$}
\rput[Br](14.5,5.5){$(x,y_{p-2})$}
\rput[Br](14.5,4){$(2x,y_{p-2})$}
\rput[Br](14.5,1.1){$((p-1)x, y_{p-2})$}

\psline[linearc=0.2,tbarsize=4pt 0,arrowsize=0.4pt 4,arrowlength=1,arrowinset=0.7]{|*->}(14.55,7.12)(14.8,7.12)(14.8,6.75)(1.35,6.75)(1.35,6)
\psline[linearc=0.2,tbarsize=4pt 0,arrowsize=0.4pt 4,arrowlength=1,arrowinset=0.7]{|*->}(14.55,5.62)(14.8,5.62)(14.8,5.25)(1.35,5.25)(1.35,4.5)
\psline[linearc=0.2,tbarsize=4pt 0,arrowsize=0.4pt 4,arrowlength=1,arrowinset=0.7]{|*->}(14.55,4.12)(14.8,4.12)(14.8,3.75)(1.35,3.75)(1.35,3)
\psline[linearc=0.2,tbarsize=4pt 0,arrowsize=0.4pt 4,arrowlength=1,arrowinset=0.7]{|*->}(14.55,2.8)(14.8,2.8)(14.8,2.43)(1.35,2.43)(1.35,1.68)
\psline[linearc=0.2,tbarsize=4pt 0,arrowsize=0.4pt 4,arrowlength=1,arrowinset=0.7]{|*->}(14.55,1.2)(15,1.2)(15,1.7)(15,8.25)(1.35,8.25)(1.35,7.5)
% dotted lines
\psline[tbarsize=4pt 0,arrowsize=0.4pt 4,arrowlength=1,arrowinset=0.7,linestyle=dotted]{|->}(14,7.5)(14,8)(13.5,8) (1.75,8)(1.75,7.5)
\psline[tbarsize=4pt 0,arrowsize=0.4pt 4,arrowlength=1,arrowinset=0.7,linestyle=dotted]{|->}(14,6)(14,6.5)(13.5,6.5)(1.75,6.5)(1.75,6)
\psline[tbarsize=4pt 0,arrowsize=0.4pt 4,arrowlength=1,arrowinset=0.7,linestyle=dotted]{|->}(14,4.5)(14,5)(13.5,5)(1.75,5)(1.75,4.5)
\psline[tbarsize=4pt 0,arrowsize=0.4pt 4,arrowlength=1,arrowinset=0.7,linestyle=dotted]{|->}(14,1.68)(14,2.15)(13.5,2.15)(1.75,2.15)(1.75,1.68)
\end{pspicture} 
}}
\caption{The cycle formed by the product $C^{(2)} C_0 C_1 \cdots C_{p-1}$}\label{fig5}
}
\end{figure}

Let $F_1$ denote the cycle formed by the product $C^{(2)} C_0 C_1 \ldots C_{p-1}$
and let $F_2 = C_p$. Thus we have \[\pi^{-1} \circ \pi_{(0,1)} = F_1 F_2\]
is a product of two disjoint cycles. Therefore for $b\in \mbZs$, the disjoint cycle
decomposition of $\pi^{-1} \circ \pi_{(0,b)}$ can be given by
\[\pi^{-1} \circ \pi_{(0,b)} = \bar{F_1} \bar{F_2},\]
where the cycles $\bar{F_1}$ and $\bar{F_2}$ are obtained by the symbol transformation:
replacing $(c,d)$ by $(bc,bd)$ in $F_1$ and $F_2$ respectively. From~(\ref{eqnae0b}) we
can see that the cycle $\bar{F_2}$ can be given by
\[\bar{F_2} =  \big( (b,bx')\, (bx,bx')\, (bx^2,bx')\cdots (bx^{p-2},bx') \big) = 
\big( (1,bx')\, (x,bx')\, (x^2,bx')\cdots (x^{p-2},bx')\big)\]
Let $bx' = x^s$ for some $s\in \{0,1,\ldots,p-2\}$ and observe that the
cycle $\bar{F_2}$ contains exactly two elements of $\mcI^{(2)}$, namely,
$(x^s,bx')$ of the form $(z,z)$ and $(x^{s-1},bx')$ of the form $(z,zx)$.
The other $(p-3)$ elements of $\bar{F_2}$ belong to $\mcI^{(1)}$. Evidently,
elements from both $\mcI^{(1)}$ and $\mcI^{(2)}$ appear in the cycle $\bar{F_1}$.\\

\noindent{\bf Case}: $a\in \mbZs\ \mbox{and}\ b=0$

By Proposition~\ref{prop43} it is enough to study the cycle structure of $\pi^{-1} \circ \pi_{(1,0)}$.
For $a\in \mbZs$, the disjoint cycle decomposition of $\pi^{-1} \circ \pi_{(a,0)}$ can be obtained
by the symbol transformation: replacing $(c,d)$ by $(ac,ad)$ in the disjoint cycle decomposition of
$\pi^{-1} \circ \pi_{(1,0)}$.

We have \[ \pi_{(1,0)}^{(0)} \big((c,d)\big) = (x(1+c),yd).\]
Then we can see that
\[\pi_{(1,0)}^{(0)} = C_0 C_1 \cdots C_{p-1} C_p.\]
where the $p+1$ cycles are given by
\begin{align}\label{eqnabe0}
% C_0 &= \big( (0,0)\, (x,0)\, (x+x^2,0)\cdots (x+\cdots+x^{p-2},0) \big)\nonumber\\
C_j &= \big((x,jy)\, (x+x^2,jy^2)\cdots (x+\cdots+x^{p-2},jy^{p-2})\, (0,j)\big)
\quad \mbox{for}\ j\in \mbZ\nonumber\\
C_p &=  \big( (y',1)\, (y',y)\, (y',y^2)\cdots (y',y^{p-2}) \big),
\end{align}
and it is evident that they all are of length $p-1$.
The missing element $(y',0)$ is fixed by $\pi_{(1,0)}^{(0)}$.

We have \[\pi_{(1,0)}^{(1)} \big((c,d)\big) = \left\{\begin{array}{ll}
(c,y+d) & \mbox{if}\ c=x\\
(c,d) & \mbox{otherwise}
\end{array}\right.\]
So it has only one cycle given by
\[\pi_{(1,0)}^{(1)}= C^{(1)} = \big( (x,0)\,(x,y)\,(x,2y)\cdots (x,(p-1)y))\big),\]
it is of length $p$ and all other elements are fixed.
We get $\pi_{(1,0)}^{(2)} \big((c,d)\big) = (c,d)$ since $b=0$, and so
$\pi_{(1,0)}^{(2)}$ is the identity map.
Therefore we have
\[  \pi^{-1} \circ \pi_{(1,0)} = \pi_{(1,0)}^{(2)}\circ \pi_{(1,0)}^{(1)}\circ \pi_{(1,0)}^{(0)}
= C^{(1)}  C_0 C_1 \cdots C_{p-1} C_p.\]

Observe that an element $(x,ky)$ of the cycle $C^{(1)}$ appears exactly in one cycle $C_j$
(when $j = k$). So the product $C^{(1)} C_0 C_1 \ldots C_{p-1}$ will form a single cycle of
length $p(p-1)$, and Figure~\ref{fig6} shows how the cycles are joined together to form a single cycle.

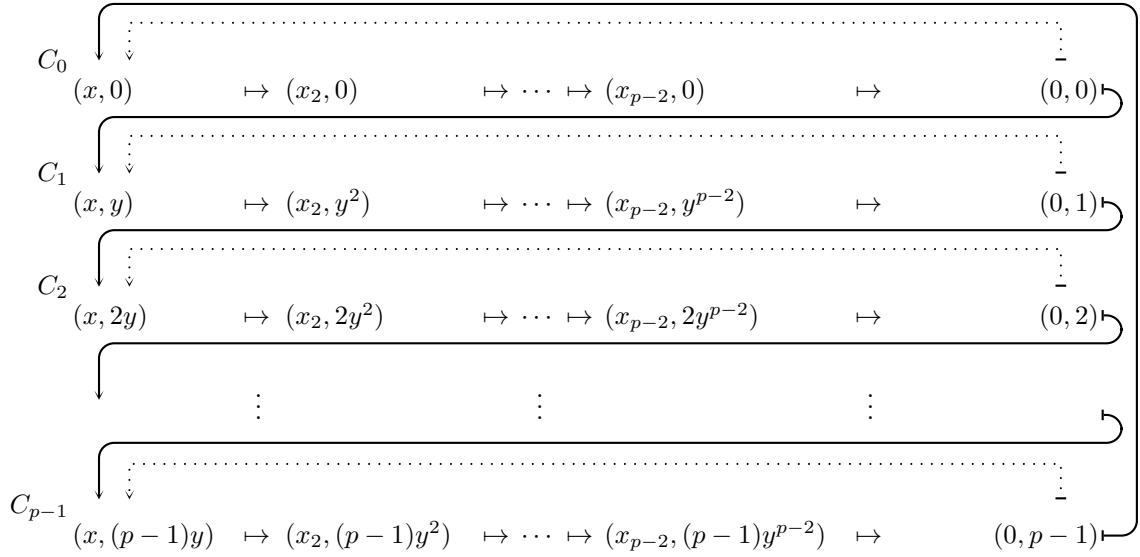
\begin{figure}[ht]
\centering{
{\small
\scalebox{1}
{
\begin{pspicture}(0,1.1)(15.6,8.1)
% column 0
\rput[Br](0.95,7.4){$C_0 $}
\rput[Br](0.95,5.9){$C_1 $}
\rput[Br](0.95,4.4){$C_2 $}
\rput[Br](0.95,1.5){$C_{p-1} $}

% column 1
\rput[Bl](1,7){$(x,0)$}
\rput[Bl](1,5.5){$(x,y)$}
\rput[Bl](1,4){$(x,2y)$}
\rput[Bl](1,1.1){$(x,(p-1)y)$}

\rput[Bl](3.25,7){$\mapsto$}
\rput[Bl](3.25,5.5){$\mapsto$}
\rput[Bl](3.25,4){$\mapsto$}
\rput[Bl](3.25,1.1){$\mapsto$}
\rput[B](3.45,2.75){$\vdots$}

% column 2
\rput[Bl](3.8,7){$(x_2,0)$}
\rput[Bl](3.8,5.5){$(x_2,y^2)$}
\rput[Bl](3.8,4){$(x_2,2y^2)$}
\rput[Bl](3.8,1.1){$(x_2,(p-1)y^2)$}

\rput[Bl](6.4,7){$\mapsto$}
\rput[Bl](6.4,5.5){$\mapsto$}
\rput[Bl](6.4,4){$\mapsto$}
\rput[Bl](6.4,1.1){$\mapsto$}
\rput[B](7.15,2.75){$\vdots$}

\rput[Bl](6.9,7){$\cdots$}
\rput[Bl](6.9,5.5){$\cdots$}
\rput[Bl](6.9,4){$\cdots$}
\rput[Bl](6.9,1.1){$\cdots$}

\rput[Bl](7.5,7){$\mapsto$}
\rput[Bl](7.5,5.5){$\mapsto$}
\rput[Bl](7.5,4){$\mapsto$}
\rput[Bl](7.5,1.1){$\mapsto$}

% column 3
\rput[Bl](8,7){$(x_{p-2},0)$}
\rput[Bl](8,5.5){$(x_{p-2},y^{p-2})$}
\rput[Bl](8,4){$(x_{p-2},2y^{p-2})$}
\rput[Bl](8,1.1){$(x_{p-2},(p-1)y^{p-2})$}

\rput[Bl](11.3,7){$\mapsto$}
\rput[Bl](11.3,5.5){$\mapsto$}
\rput[Bl](11.3,4){$\mapsto$}
\rput[Bl](11.3,1.1){$\mapsto$}
\rput[B](11.5,2.75){$\vdots$}

% column 4
\rput[Br](14.5,7){$(0,0)$}
\rput[Br](14.5,5.5){$(0,1)$}
\rput[Br](14.5,4){$(0,2)$}
\rput[Br](14.5,1.1){$(0, p-1)$}

\psline[linearc=0.2,tbarsize=4pt 0,arrowsize=0.4pt 4,arrowlength=1,arrowinset=0.7]{|*->}(14.55,7.12)(14.8,7.12)(14.8,6.75)(1.35,6.75)(1.35,6)
\psline[linearc=0.2,tbarsize=4pt 0,arrowsize=0.4pt 4,arrowlength=1,arrowinset=0.7]{|*->}(14.55,5.62)(14.8,5.62)(14.8,5.25)(1.35,5.25)(1.35,4.5)
\psline[linearc=0.2,tbarsize=4pt 0,arrowsize=0.4pt 4,arrowlength=1,arrowinset=0.7]{|*->}(14.55,4.12)(14.8,4.12)(14.8,3.75)(1.35,3.75)(1.35,3)
\psline[linearc=0.2,tbarsize=4pt 0,arrowsize=0.4pt 4,arrowlength=1,arrowinset=0.7]{|*->}(14.55,2.8)(14.8,2.8)(14.8,2.43)(1.35,2.43)(1.35,1.68)
\psline[linearc=0.2,tbarsize=4pt 0,arrowsize=0.4pt 4,arrowlength=1,arrowinset=0.7]{|*->}(14.55,1.2)(15,1.2)(15,1.7)(15,8.25)(1.35,8.25)(1.35,7.5)

% dotted lines
\psline[tbarsize=4pt 0,arrowsize=0.4pt 4,arrowlength=1,arrowinset=0.7,linestyle=dotted]{|->}(14,7.5)(14,8)(13.5,8) (1.75,8)(1.75,7.5)
\psline[tbarsize=4pt 0,arrowsize=0.4pt 4,arrowlength=1,arrowinset=0.7,linestyle=dotted]{|->}(14,6)(14,6.5)(13.5,6.5)(1.75,6.5)(1.75,6)
\psline[tbarsize=4pt 0,arrowsize=0.4pt 4,arrowlength=1,arrowinset=0.7,linestyle=dotted]{|->}(14,4.5)(14,5)(13.5,5)(1.75,5)(1.75,4.5)
\psline[tbarsize=4pt 0,arrowsize=0.4pt 4,arrowlength=1,arrowinset=0.7,linestyle=dotted]{|->}(14,1.68)(14,2.15)(13.5,2.15)(1.75,2.15)(1.75,1.68)
\end{pspicture} 
}}
\caption{The cycle formed by the product $C^{(1)} C_0 C_1 \cdots C_{p-1}$}\label{fig6}
}
\end{figure}

Let $F_1$ denote the cycle formed by the product $C^{(1)} C_0 C_1 \ldots C_{p-1}$ and let $F_2 = C_p$.
Thus we have \[ \pi^{-1} \circ \pi_{(1,0)} = F_1 F_2\]
is a product of two disjoint cycles. Therefore for $a\in \mbZs$, the disjoint cycle decomposition of
$\pi^{-1} \circ \pi_{(a,0)}$ can be given by
\[\pi^{-1} \circ \pi_{(a,0)} = \bar{F_1} \bar{F_2},\]
where the cycles $\bar{F_1}$ and $\bar{F_2}$ are obtained by the symbol transformation:
replacing $(c,d)$ by $(ac,ad)$ in $F_1$ and $F_2$ respectively. From~(\ref{eqnabe0}) we
can see that the cycle $\bar{F_2}$ can be given by
\[\bar{F_2} = \big( (ay',a)\, (ay',ay)\, (ay',ay^2)\cdots (ay',ay^{p-2})\big) =
\big( (ay',1)\, (ay',y)\, (ay',y^2)\cdots (ay',y^{p-2}) \big)\]
Let $ay' = y^s$ for some $s\in \{0,1,\ldots,p-2\}$ and observe that the
cycle $\bar{F_2}$ contains exactly two elements of $\mcI^{(2)}$, namely, $(ay', y^s)$
of the form $(z,z)$ and $(ay',y^{s-1})$ of the form $(z,zx)$. The other $(p-3)$ elements
of $\bar{F_2}$ belong to $\mcI^{(1)}$. Evidently,
elements from both $\mcI^{(1)}$ and $\mcI^{(2)}$ appear in the cycle $\bar{F_1}$.\\

\noindent{\bf Case}: $a\in \mbZs\ \mbox{and}\ b\in \mbZs$

By Proposition~\ref{prop43} it is enough to study the cycle structure of $\pi^{-1} \circ \pi_{(1,b)}$
for $ b\in \mbZs$. For $a\in \mbZs$, the disjoint cycle decomposition of 
$\pi^{-1} \circ \pi_{(a,ab)}$ (observe that all the elements of this
case are covered) can be obtained by the symbol transformation: replacing $(c,d)$ by $(ac,ad)$ in
the disjoint cycle decomposition of $\pi^{-1} \circ \pi_{(1,b)}$.

We have \[ \pi_{(1,b)}^{(0)} \big((c,d)\big) = (x(1+c),y(b+d)).\]
Then we can see that
\[ \pi_{(1,b)}^{(0)} = C_0 C_1 \cdots C_{p-1} C_p.\]
where the $p+1$ cycles are given by

\begin{align}\label{eqnab}
% C_0 &= \big( (0,0)\, (x,yb)\, (x+x^2,(y+y^2)b)\cdots (x_{p-2},y_{p-2}b) \big)\nonumber\\
C_j &= \big((x,(b+j)y)\, (x+x^2,(y+y^2)b+jy^2)\cdots (x_{p-2},y_{p-2}b+jy^{p-2})\, (0,j)\big)
\ \ \mbox{for}\ j\in \mbZ\nonumber\\
C_p  &=  \big( (y',0)\, (y',yb)\, (y',(y+y^2)b)\cdots (y',y_{p-2}b \big),
\end{align}
and it is evident that they all are of length $p-1$.
The missing element $(y',bx')$ is fixed by $\pi_{(1,b)}^{(0)}$.

Note that the cycle $C_j$ for $j = bx'$ is given by
\begin{equation}\label{eqnbxp}
C_{bx'} =\big((x_1,bx')\, (x_2,bx')\cdots (x_{p-2},bx')\, (0,bx')\big)
\end{equation}
and all the elements of $C_{bx'}$ are of the form $(c,bx')$, where $c\,(\neq y')\in \mbZ$,
and also all the elements of $C_p$ are of the form $(y',d)$, where $d\,(\neq bx')\in \mbZ$.
As in the previous two cases, we will show that the permutation
\[\pi^{-1} \circ \pi_{(1,b)} = \pi_{(1,b)}^{(2)}\circ \pi_{(1,b)}^{(1)}\circ \pi_{(1,b)}^{(0)} = F_1 F_2\]
can be written as a product of two disjoint cycles. Moreover, we will show that all the
elements of the cycle $C_{bx'}$ appear in $F_1$ and all the elements of the cycle $C_p$ appear
in $F_2$, which is the crucial point in our proof.

We have \[\pi_{(1,b)}^{(1)} \big((c,d)\big) = \left\{\begin{array}{ll}
(x,y+d) & \mbox{if}\ c=x\\
(c,d) & \mbox{otherwise}
\end{array}\right.\]
So it has only one cycle given by
\[ C^{(1)} = \big( (x,0)\,(x,y)\,(x,2y)\cdots (x,(p-1)y))\big),\]
it is of length $p$ and all other elements are fixed.

Observe that an element $(x,ky)$ of the cycle $C^{(1)}$ appears exactly in one cycle $C_j$
(when $b+j = k$). So the product $C^{(1)} C_0 C_1 \cdots C_{p-1}$ will form a single cycle of
length $p(p-1)$ as shown in Figure~\ref{fig7}, and let us denote it by $\hat{C}$. Then we get
\begin{equation*}\label{cycleprod0}
\pi_{(1,b)}^{(1)}\circ \pi_{(1,b)}^{(0)}
=C^{(1)} C_0 C_1 \cdots C_{p-1} C_p = \hat{C} C_p.
\end{equation*}

\begin{figure}[ht]
\centering{
{\small
\scalebox{1}
{
\begin{pspicture}(0,0.5)(15.80,10)
\rput[Bl](1,9){$(x,b y)\ \mapsto\ (x_2,by_2)\ \mapsto\,\cdots\,\mapsto\ (x_{p-2},by_{p-2})\ \mapsto\ (0,0)$}
\rput[Bl](1,7.5){$(x,b y + y)\ \mapsto\ (x_2,by_2 + y^2)\ \mapsto\,\cdots\,\mapsto\ (x_{p-2},by_{p-2}+y^{p-2})\ \mapsto\ (0,1)$}
\rput[Bl](0.5,5.5){$(x,(b+bx'-1)y) \mapsto (x_2,by_2+(bx'-1)y^2) \mapsto\cdots\mapsto (x_{p-2},by_{p-2}+(bx'-1)y^{p-2}) \mapsto (0,bx'-1)$}
\rput[Bl](1,4){$(x,(b+bx')y)= (x,bx')\ \mapsto\ (x_2,bx') \ \mapsto\cdots\mapsto\  (x_{p-2},bx')\ \mapsto (0,bx') $}
\rput[Bl](0.5,2.5){$(x,(b+bx'+1)y) \mapsto (x_2,by_2+(bx'+1)y^2) \mapsto\cdots\mapsto (x_{p-2},by_{p-2}+(bx'+1)y^{p-2}) \mapsto (0,bx'+1)$}
\rput[Bl](1,0.5){$(x,(b+p-1)y) \mapsto (x_2,by_2+(p-1)y^2) \mapsto\cdots\mapsto (x_{p-2},by_{p-2}+(p-1)y^{p-2}) \mapsto (0,p-1)$}

\rput[Br](0.9,0.9){$C_{p-1}$}
\rput[Br](0.9,2.9){$C_{bx'+1}$}
\rput[Br](0.9,4.4){$C_{bx'}$}
\rput[Br](0.9,5.9){$C_{bx'-1}$}
\rput[Br](0.9,7.9){$C_{1}$}
\rput[Br](0.9,9.4){$C_{0}$}

\psline[linearc=0.2,linestyle=dotted,tbarsize=4pt 0,arrowsize=0.4pt 4,arrowlength=1,arrowinset=0.7]{|->}(14,0.9)(14,1.3)(13.5,1.3)(2,1.3)(2,0.9)
\psline[linearc=0.2,linestyle=dotted,tbarsize=4pt 0,arrowsize=0.4pt 4,arrowlength=1,arrowinset=0.7]{|->}(15,2.9)(15,3.3)(14.5,3.3)(2,3.3)(2,2.9)
\psline[linearc=0.2,linestyle=dotted,tbarsize=4pt 0,arrowsize=0.4pt 4,arrowlength=1,arrowinset=0.7]{|->}(11,4.4)(11,4.8)(10.75,4.8)(1.85,4.8)(1.85,4.4)
\psline[linearc=0.2,linestyle=dotted,tbarsize=4pt 0,arrowsize=0.4pt 4,arrowlength=1,arrowinset=0.7]{|->}(15,5.9)(15,6.3)(14.5,6.3)(2,6.3)(2,5.9)
\psline[linearc=0.2,linestyle=dotted,tbarsize=4pt 0,arrowsize=0.4pt 4,arrowlength=1,arrowinset=0.7]{|->}(11.5,7.9)(11.5,8.3)(11,8.3)(2,8.3)(2,7.9)
\psline[linearc=0.2,linestyle=dotted,tbarsize=4pt 0,arrowsize=0.4pt 4,arrowlength=1,arrowinset=0.7]{|->}(9.15,9.4)(9.15,9.8)(8.65,9.8)(1.8,9.8)(1.8,9.4)

\psline[linearc=0.2,tbarsize=4pt 0,arrowsize=0.4pt 4,arrowlength=1,arrowinset=0.7]{|->}(15,1.85)(15.25,1.85)(15.25,1.5)(1.5,1.5)(1.5,0.9)
\psline[linearc=0.2,tbarsize=4pt 0,arrowsize=0.4pt 4,arrowlength=1,arrowinset=0.7]{|->}(15.4,2.6)(15.65,2.6)(15.65,2.2)(1.5,2.2)(1.5,1.8)
\psline[linearc=0.2,tbarsize=4pt 0,arrowsize=0.4pt 4,arrowlength=1,arrowinset=0.7]{|->}(11.75,4.1)(12,4.1)(12,3.6)(1.5,3.6)(1.5,2.9)
\psline[linearc=0.2,tbarsize=4pt 0,arrowsize=0.4pt 4,arrowlength=1,arrowinset=0.7]{|->}(15.4,5.6)(15.65,5.6)(15.65,5.1)(1.5,5.1)(1.5,4.4)
\psline[linearc=0.2,tbarsize=4pt 0,arrowsize=0.4pt 4,arrowlength=1,arrowinset=0.7]{|->}(15,6.85)(15.25,6.85)(15.25,6.5)(1.5,6.5)(1.5,5.9)
\psline[linearc=0.2,tbarsize=4pt 0,arrowsize=0.4pt 4,arrowlength=1,arrowinset=0.7]{|->}(11.95,7.6)(12.2,7.6)(12.2,7.25)(1.5,7.25)(1.5,6.9)
\psline[linearc=0.2,tbarsize=4pt 0,arrowsize=0.4pt 4,arrowlength=1,arrowinset=0.7]{|->}(9.5,9.1)(9.75,9.1)(9.75,8.6)(1.5,8.6)(1.5,7.9)
\psline[linearc=0.2,tbarsize=4pt 0,arrowsize=0.4pt 4,arrowlength=1,arrowinset=0.7]{|->}(15,0.5)(15.75,0.5)(15.75,0.75)(15.75,9.75)(15.75,10)(1.5,10)(1.5,9.4)
\rput[Bl](8.5,1.75){$\vdots$}
\rput[Bl](8.5,6.75){$\vdots$}
\end{pspicture} 
}}
\caption{The cycle formed by the product $C^{(1)} C_0 C_1 \cdots C_{p-1}$}\label{fig7}
}
\end{figure}
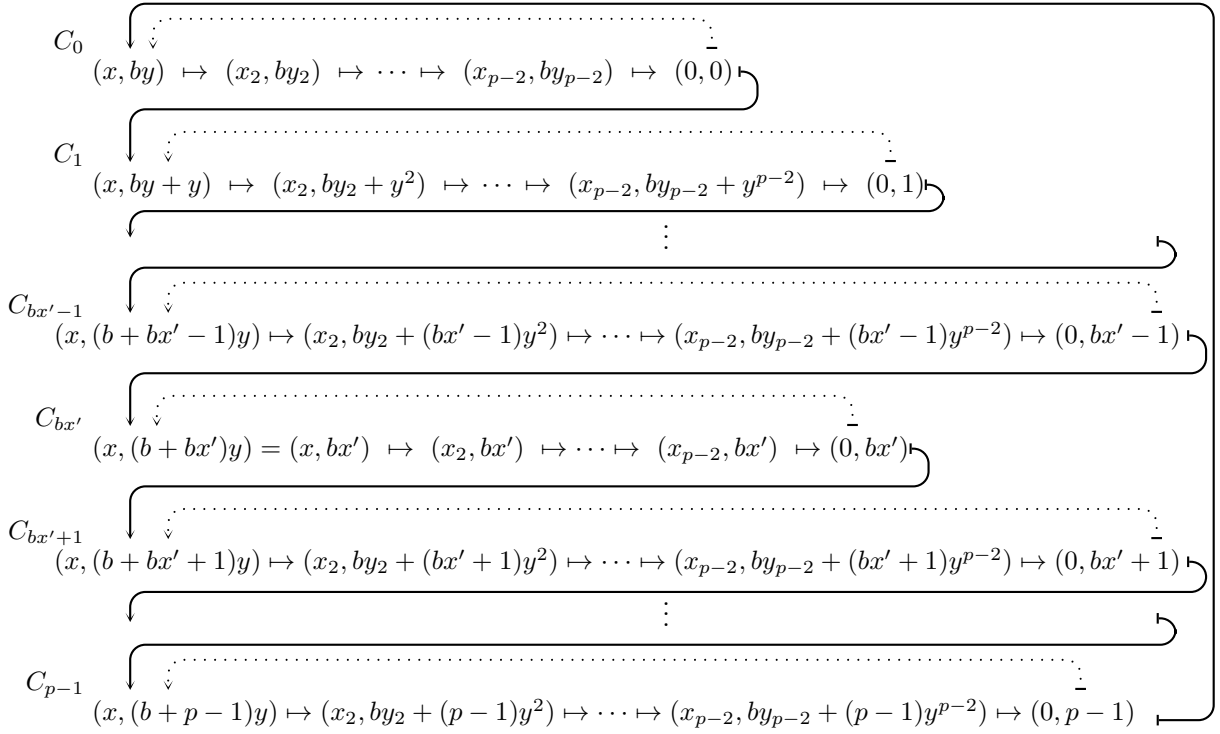

We have \[\pi_{(1,b)}^{(2)} \big((c,d)\big) = \left\{\begin{array}{ll}
(c+x^2b,0) & \mbox{if}\ d=0\\
(c,d) & \mbox{otherwise}
\end{array}\right.\]
So it has only one cycle given by
\[ C^{(2)} = \big( (0,0)\,(bx^2,0)\,(2bx^2,0)\cdots ((p-1)bx^2,0))\big),\]
it is of length $p$ and all other elements are fixed. Therefore we have
\begin{equation*}\label{cycleprod1}
\pi^{-1} \circ \pi_{(1,b)} = \pi_{(1,b)}^{(2)}\circ \pi_{(1,b)}^{(1)}\circ \pi_{(1,b)}^{(0)}
=C^{(2)} C^{(1)} C_0 C_1 \cdots C_{p-1} C_p = C^{(2)} \hat{C}C_p.
\end{equation*}

Observe that the cycle $C^{(2)}$ contains all the elements of the form $(c,0)$,
where $c\in \mbZ$. Note also that  the cycle $C_j$ for $j (\neq bx')\in \mbZ$
contains exactly one element of the form $(c,0)$ given by $(-\frac{jx}{b},0)$
(see that for some $i\in \{1,2,\ldots,p-2\}$, if the second component satisfies
$(b y_i + j y^i)=0$ then the first component $x_i = -\frac{jx}{b}$),
and the other element $(-x'x,0)=(y',0)$ appears in the cycle $C_p$. So all the elements of
$C^{(2)}$ except $(y',0)$ appear in the cycle $\hat{C} = C_0 C_1 \cdots C_{p-1}$,
and the order in which they appear can also be seen from the figure below.

\begin{figure}[H]
\centering{
{\small
\scalebox{1}
{
\begin{pspicture}(1,-1.1)(14.1,1.15)
\rput[Bl](1.2,1){$(0,0) \mapsto\,\cdots\,\mapsto (-\frac{x}{b},0) \mapsto\,\cdots\,\mapsto (-\frac{2x}{b},0)
\mapsto\,\cdots\,\mapsto (-\frac{(bx'-1)x}{b},0) = (y'+\frac{x}{b},0)$}
\rput{240}(11.75,0.2){${\mapsto\,\cdots\,\mapsto}$}
\rput[Br](14,-1){$(-\frac{(p-1)x}{b},0) \mapsfrom\,\cdots\,\mapsfrom (-\frac{(bx'+2)x}{b},0)
\mapsfrom\,\cdots\,\mapsfrom (-\frac{(bx'+1)x}{b},0) = (y' - \frac{x}{b},0)$}
\rput{120}(2.25,0.2){${\mapsto\,\cdots\,\mapsto}$}
\end{pspicture}
}}
\caption{Order of the elements of the form $(z,0)$ in $\hat{C}$}\label{fig8}
}
\end{figure}

Looking at only the elements of the form $(c,0)$ in the representation of $\hat{C}$,
observe that the element
$(-\frac{jx}{b},0)$ for $j (\neq bx')\in \mbZ$ is followed by $(-\frac{(j+1)x}{b},0)$
when $j \neq bx'-1$, and in the case when $j = bx'-1$, {\em i.e.}, $(-\frac{(bx'-1)x}{b},0)$
is followed by $(-\frac{(bx'+1)x}{b},0)$, a jump of $-\frac{2x}{b}$ in the
first component. Keeping in view how the composition is worked out,
we can see the following: suppose that an element $(c',d')$ is mapped
to $(c,0)$ in $\pi_{(1,b)}^{(1)}\circ \pi_{(1,b)}^{(0)} = \hat{C} C_p$,
then the element $(c',d')$ will be mapped to $(c+x^2 b,0)$ in 
$\pi_{(1,b)}^{(2)}\circ \pi_{(1,b)}^{(1)}\circ \pi_{(1,b)}^{(0)} = C^{(2)} \hat{C}C_p$;
the product by $C^{(2)}$ on the left with $\hat{C}C_p$ does not effect the mapping of
other elements that are not mapped to elements of the form $(c,0)$ in $\hat{C}C_p$.
Intuitively, the product by $C^{(2)}$ on the left with $\hat{C}C_p$ permutes the expressions:
from an element of the form $(c,0)$ to the next first element $(c',d')$ that is mapped to an element
of the form $(c,0)$ in the representation of $\hat{C}$ and $C_p$.
Having this in mind, let us now look at the cycles in the disjoint cycle decomposition of
$\pi_{(1,b)}^{(2)}\circ \pi_{(1,b)}^{(1)}\circ \pi_{(1,b)}^{(0)} = C^{(2)} \hat{C}C_p$.
The following summations in~(\ref{eqnt}) are useful to check the cycles
of the product $C^{(2)} \hat{C}C_p$.

Note that $x^2 b \neq \frac{x}{b}$ for any $b\in \mbZs$ since otherwise $xb^2 = 1$,
a contradiction to the fact that $x$ is a generator of $\mbZs$ and it can not be a square.
Let $t, 1\le t\le p-1,$ be the inverse of $(x b^2 -1)$ in $\mbZs$. Then we can see that $t$
is the smallest positive integer satisfying
\begin{equation}\label{eqnt}
 t x^2 b - (t+1) \frac{x}{b} = 0\quad \mbox{and}\quad (p-t) x^2 b - (p-t-1) \frac{x}{b} = 0.
\end{equation}

Let $F_1$ be the cycle containing the element $(-\frac{(bx'-1)x}{b},0) = (y'+\frac{x}{b},0)$ in the
product $C^{(2)} \hat{C}C_p$ and it can be seen from the figure below. Clearly, the cycle $F_1$
contains $t$ elements of the form $(c,0)$.

\begin{figure}[ht]
\centering{
{\small
\scalebox{1}
{
\begin{pspicture}(0,-1)(16,1.5)
\rput[Bl](0.6,1){$(y'+\frac{x}{b},0)\,{\mapsto\,\cdots\,\mapsto}
\overbrace{(x,bx'){\mapsto\,\cdots\,\mapsto} (x_{p-2},bx')\mapsto (0,bx')}^{\mbox{elements\ of}\ C_{bx'} }
{\mapsto\,\cdots\,\mapsto}  (y'+\frac{x}{b}-\frac{2x}{b} + x^2 b,0)$}
\rput{300}(13.4,0.1){${\mapsto\,\cdots\,\mapsto}$}
\rput[Br](15,-1){$(y'+\frac{x}{b} - \frac{tx}{b}+(t-1)x^2 b,0) \mapsfrom\,\cdots\,\mapsfrom
(y'+\frac{x}{b}-\frac{4x}{b} + 3x^2 b,0) \mapsfrom\,\cdots\,\mapsfrom
(y'+\frac{x}{b}-\frac{3x}{b} + 2x^2 b,0)$}
\rput{120}(1.75,0.1){${\mapsto\,\cdots\,\mapsto}$}
\end{pspicture}
}}
\caption{The cycle $F_1$ containing the element $(y'+\frac{x}{b},0)$ in the product
$C^{(2)} \hat{C}C_p$}\label{fig9}
}
\end{figure}
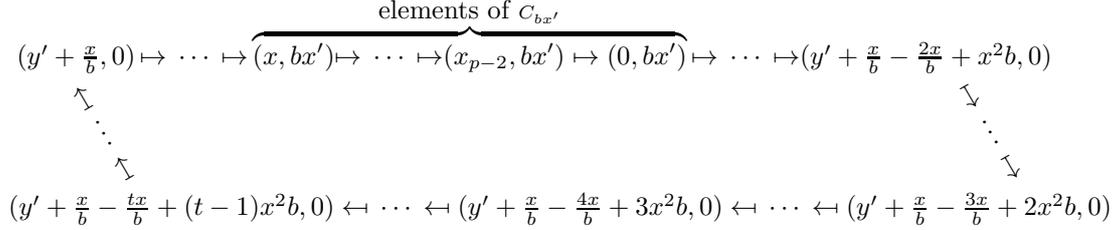

Let $F_2$ be the cycle containing the element $(y',0)$ in the product $C^{(2)} \hat{C}C_p$
and it can be seen from the figure below. Clearly, the cycle $F_2$ contains the other $(p-t)$
elements of the form $(c,0)$.

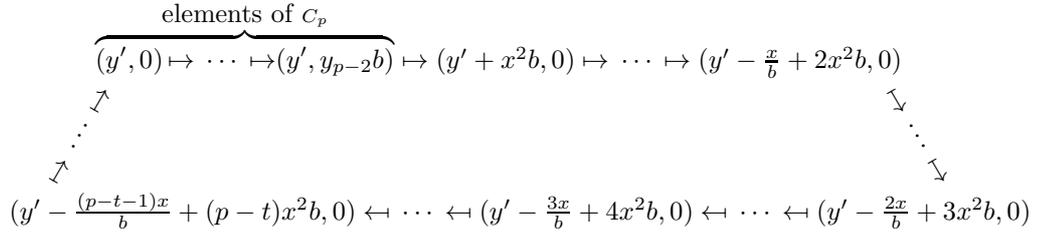
\begin{figure}[ht]
\centering{
{\small
\scalebox{1}
{
\begin{pspicture}(0,-1)(15.5,1.75)
\rput[Bl](2.7,1){$\overbrace{(y',0)\,{\mapsto\,\cdots\,\mapsto} (y',y_{p-2}b)}^{\mbox{elements\ of}\ C_{p}}
\mapsto (y'+x^2 b,0) \mapsto\,\cdots\,\mapsto (y'-\frac{x}{b}+2 x^2 b,0)$}
\rput{300}(13.5,0.1){${\mapsto\,\cdots\,\mapsto}$}
\rput[Br](15,-1){$(y' - \frac{(p-t-1)x}{b}+(p-t)x^2 b,0) \mapsfrom\,\cdots\,\mapsfrom
(y'-\frac{3x}{b} + 4x^2 b,0) \mapsfrom\,\cdots\,\mapsfrom
(y'-\frac{2x}{b} + 3x^2 b,0)$}
\rput{60}(2.5,0.1){${\mapsto\,\cdots\,\mapsto}$}
\end{pspicture}
}}
\caption{The cycle $F_2$ containing the element $(y',0)$ in the product
$C^{(2)} \hat{C}C_p$}\label{fig10}
}
\end{figure}

Note that all the elements of the form $(c,0)$ appear in the two cycles $F_1$ and $F_2$,
and so these cycles must contain all the elements of $\mcI$ except $(y',bx')$ which is the fixed
element in $\pi_{(1,b)}^{(1)}\circ \pi_{(1,b)}^{(0)} = \hat{C} C_p$. The element $(y',bx')$
is also fixed by $\pi_{(1,b)}^{(2)}$. Therefore we get
\[\pi_{(1,b)}^{(2)}\circ \pi_{(1,b)}^{(1)}\circ \pi_{(1,b)}^{(0)}
= C^{(2)} \hat{C}C_p = F_1 F_2.\]
Note also that $F_1$ contains all the elements of $C_{bx'}$ and $F_2$ contains all
the elements of $C_p$. Finally, for $a\in \mbZs$, we get that \[\pi^{-1}\circ \pi_{(a,ab)} =
\pi_{(a,ab)}^{(2)}\circ \pi_{(a,ab)}^{(1)}\circ \pi_{(a,ab)}^{(0)} = \bar{F}_1 \bar{F}_2,\]
where the cycles $\bar{F}_1$ and $\bar{F}_2$ are obtained by the symbol transformation:
replacing $(c,d)$ by $(ac,ad)$ in the cycles $F_1$ and $F_2$ respectively.

From (\ref{eqnab}) we can see that the cycle $\bar{F}_2$ contains elements of the
form $(ay',d)$ for all $d\in \mbZ$ except may be $d = abx'$. So it must contain at least
one element of the form $(z,z)$ or $(z,zx)$ which belong to $\mcI^{(2)}$.
Since $F_2$ contains all the elements of $C_p$ in the order as mentioned in~(\ref{eqnab}),
in the corresponding part of the cycle $\bar{F}_2$, at most three elements of $\mcI^{(2)}$
can appear; the other elements (at least $p-3$) belong to $\mcI^{(1)}$.
Evidently, the cycle $\bar{F}_1$ contains elements from both $\mcI^{(1)}$
and $\mcI^{(2)}$. Hence the proof is completed.\\

\noindent{\bf Acknowledgement}: The first author would like to thank Dr. T. Karthick, ISI-Chennai Centre
for suggesting the topic and helpful discussions.

\end{document}